\documentclass[12pt,reqno]{amsart}
\usepackage{lmodern,amsmath,amssymb,amsthm,mathtools,booktabs,longtable}
\usepackage[margin=1in]{geometry}
\usepackage{colonequals}
\usepackage[colorlinks=true,linkcolor=blue,citecolor=blue,urlcolor=blue]{hyperref}

\numberwithin{equation}{section}
\newtheorem{theorem}{Theorem}[section]
\newtheorem{lemma}[theorem]{Lemma}

\newtheorem{remark}[theorem]{Remark}
\newcommand{\A}{\mathcal A}
\newcommand{\R}{\mathbb R}
\newcommand{\E}{\mathbb E}
\renewcommand{\P}{\mathbb P}

\renewcommand{\subset}{\subseteq}
\allowdisplaybreaks[2]
\title{On the Propeller Conjecture}
\author{Steven Heilman}

\thanks{
Email: stevenmheilman@gmail.com\\
Supported by NSF Grant CCF AF 2448108.  \\
2020 Mathematics Subject Classification: 68W27, 65K10, 68Q32\\
Keywords: Gaussian, propeller problem, kernel clustering, hardness of approximation\\
Department of Mathematics, University of Southern California, Los Angeles, CA 90089}

\date{\today}
\begin{document}
\begin{abstract}
We simplify and improve OpenAI's proof of the propeller conjecture of Khot and Naor from 2008: the sum of the squared lengths of the Gaussian moments of any partition of Euclidean space is at most $9/(8\pi)$.  We show equality only occurs for three 120 degree sectors in the plane (up to symmetries, measure zero changes, and taking orthogonal cross products with Euclidean space).  We also remove computer assistance from the three-dimensional case of the conjecture, which previously used an epsilon net traversal.  
\end{abstract}
\maketitle

\section{Statement and proof structure}
Let $\|\cdot\|$ denote the standard Euclidean norm on $\R^n$.  Let $\gamma_n$ be standard Gaussian probability measure on $\mathbb R^n$, with density $\exp(-\|x\|^2 /2)/(2\pi)^{n/2}$.  Denote $S^{n-1}\colonequals\{x\in\R^{n}\colon\|x\|=1\}$.  For a finite measurable partition $\mathcal A=(A_1,\ldots,A_N)$ of $\R^n$, put for any $1\leq i\leq N$,
\begin{equation}\label{mdef}
\begin{aligned}
z_i\colonequals\int_{A_i}x\,d\gamma_n(x)&\in\R^n,\qquad p_i\colonequals\gamma_n(A_i),\qquad F(\A)\colonequals\sum_{i=1}^N\|z_i\|^2,
\\
C_N&\colonequals\sup_{n\geq1}\sup_{\A}F(\A),
\qquad
C\colonequals\sup_{N\geq1}C_N.
\end{aligned}
\end{equation}

\begin{theorem}[Propeller Conjecture {\cite{KN,khot13}}]\label{intro:main}
$C=\frac{9}{8\pi}$.
\end{theorem}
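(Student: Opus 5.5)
The plan has two halves: the lower bound $C\ge 9/(8\pi)$, and the much harder upper bound, which I reduce to a finite-dimensional problem in $\R^2$ or $\R^3$.

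\emph{Lower bound.} Take $n=2$, $N=3$, and let $A_1,A_2,A_3$ be the three $120^\circ$ sectors. For a sector $A$ of angle $2\pi/3$ symmetric about the unit direction $u$, the moment is
\[
z=\frac{1}{2\pi}\int_0^\infty r^2e^{-r^2/2}\,dr\int_{-\pi/3}^{\pi/3}\cos\theta\,d\theta\;u.
\]
The radial integral equals $\sqrt{\pi/2}$ and the angular integral equals $\sqrt3$, so $\|z\|^2=\frac{3}{8\pi}$. Summing over the three sectors gives $F(\A)=9/(8\pi)$.

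\emph{Upper bound, first-order conditions.} First I record the structural facts about maximizers, following Khot--Naor.
\begin{itemize}
\item Since $F$ is convex in the indicator functions, a maximizer exists among partitions given by $A_i=\{x:\langle x,z_i\rangle\ge\langle x,z_j\rangle\ \forall j\}$. Moving mass from cell $j$ to cell $i$ at $x$ changes $F$ to first order by $2\langle x,z_i-z_j\rangle$. So maximizers are Voronoi-type cones with apexes at the origin, determined by the vectors $z_i$.
\item Every $z_i$ then lies in $V=\mathrm{span}\{z_i\}$, and the cells are cylinders over $V$. So we may take $n=\dim V\le N-1$, using $\sum_i z_i=\E x=0$.
\item Khot--Naor show $C=C_3$ and that $C_4$ reduces to $\R^3$, so the problem becomes partitions of $\R^2$ or $\R^3$ into at most $4$ cones, written $A_i=\R_{>0}\cdot\Omega_i$ with $\Omega_i\subset S^{n-1}$.
\end{itemize}

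\emph{Reduction to the sphere.} For cones, $z_i=c_n\int_{\Omega_i}\theta\,d\sigma(\theta)$, where $c_n=\E\|x\|$ and $\sigma$ is the normalized surface measure. The task becomes maximizing $c_n^2\sum_i\|\int_{\Omega_i}\theta\,d\sigma\|^2$.

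For $n=2$ with arcs of lengths $\alpha_i$ summing to $2\pi$, the objective is
\[
\frac{1}{2\pi}\sum_i\Bigl(2\sin\frac{\alpha_i}{2}\Bigr)^2\cdot\frac{1}{2\pi}\cdot\frac{\pi}{2}.
\]
Concavity of $\sin^2$ on the relevant range, together with a Lagrange/Jensen argument, identifies the maximum: for $N=2$ it is two halves, giving $2/\pi<9/(8\pi)$; for $N\ge3$ the three equal arcs are optimal, with equality only there.

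For $n=3$ with four cells, the Voronoi condition makes the $\Omega_i$ spherical polygons. Their boundaries are great-circle arcs perpendicular to $z_i-z_j$, and they meet at most at two triple points or form a tetrahedral-type configuration. I would parametrize these configurations by the vertices on $S^2$, a finite-dimensional family, and bound the objective analytically. The bound would use the closed form for the first moment of a spherical polygon: $\int_\Omega\theta\,d\sigma$ equals a sum over edges of (edge length)$\times$(edge normal)$/2$, in units of the area $4\pi$.

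\emph{Main obstacle.} The obstacle is this three-dimensional, four-cell case, where previously only an $\varepsilon$-net computation was available. My first approach is to use the self-consistency condition $z_i\propto$ (moment of the cell determined by $z$) as a fixed-point equation. This reduces the problem to checking that a fixed point has one cell of measure zero, i.e. it degenerates to the planar propeller. Concretely, I would show that the Hessian of $F$ along cone deformations is positive definite at any genuinely $3$-dimensional critical configuration, such as the regular tetrahedral one whose value is computable and smaller. That positivity would force the maximum to the boundary, where a cell vanishes. The equality case then follows from the strict inequalities in the planar step and from uniqueness of the fixed point up to rotation.
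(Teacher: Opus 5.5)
\textbf{The reduction to at most four cells is missing.} You write that Khot--Naor show $C=C_3$. That equality is the Propeller Conjecture itself, so this step is circular.

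What \cite{KN} actually supplies is:
\begin{itemize}
\item the first-variation structure: Voronoi cones, $\sum_i z_i=0$, and ambient dimension $n=k-1$ for a maximizer with $k$ cells of positive measure;
\item the values $C_2=1/\pi$ and $C_3=9/(8\pi)$.
\end{itemize}
Nothing there rules out a maximizer with $k\ge5$ cells of positive measure in $\R^{k-1}$. Your plan therefore covers only $N\le4$.

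This missing step is the main new content of the paper (Lemma~\ref{dimlem}). Write $r_i=\|z_i\|/\sqrt{C_k}$, so that $\sum_i r_i^2=1=\sum_i p_i$ at a maximizer. The paper then proves:
\begin{itemize}
\item $r_i\le2/3$ and $p_i\ge a_kr_i^2$, by half-space rearrangement (and spherical-cap rearrangement for $k=5$);
\item $p_i\ge0.415r_i+0.15r_i^2$. For the lower bound on $\E\max(X_i-\max_{j\ne i}X_j,0)$, delete cell $i$ and apply Lemma~\ref{knlem} to the remaining $k-1$ vectors. For the upper bound on the same quantity, apply Ehrhard's inequality to the translates $A_i+uh$;
\item $r_3^2<t^2+2.15t$, where $t$ is the smallest normalized centroid, via a Gaussian density estimate.
\end{itemize}
The excesses $p_j-a_kr_j^2$ must total $1-a_k$, and summing the bounds above contradicts this. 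Some quantitative argument of this kind is indispensable. Without it your proposal does not prove the theorem.

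\textbf{The three-dimensional four-cell case is only sketched.} You propose to show that the Hessian is positive definite at every genuinely three-dimensional critical configuration. However, you give no way to locate these configurations or to compute the second variation, and you name only the regular tetrahedral one. There is a second nondegenerate critical configuration, whose regular triangle has edge length $\theta_{12}\approx1.538$, and your argument would have to handle it too. Positive definiteness is also more than you need, since a single direction of increase suffices, and it need not hold at a saddle.

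The paper avoids second variations entirely and instead classifies the critical points:
\begin{itemize}
\item The equations \eqref{expeq} give $VB=0$ with $\operatorname{rank}V=3$, so $B=xx^T$ has rank one.
\item Then $\eta'<-1$ and the mean value theorem force $x_1=x_2=x_3\le x_4$.
\item A strict convexity argument leaves at most two solutions, whose values are checked to lie below $9/(8\pi)$ \cite{HJN}.
\end{itemize}

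\textbf{Two smaller slips in your planar step.} First, two half-planes give $F=1/\pi$, not $2/\pi$; since $2/\pi>9/(8\pi)$, your inequality as written is false. Second, $\sin^2(\alpha/2)$ is convex for $\alpha<\pi/2$, so a bare Jensen argument does not apply. This case is, however, already in \cite{KN}.
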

\begin{remark}\label{uniquerk}
Moreover, if $F(\A)=9/(8\pi)$, then up to measure zero changes, rotations, and relabeling, we have $n\geq2$ and $\A=(B_1\times \R^{n-2},B_2\times \R^{n-2},B_3\times \R^{n-2},\emptyset,\ldots,\emptyset)$, where $B_1,B_2,B_3$ are each 120 degree cones centered at the origin.
\end{remark}

\begin{figure}[h]
\centering
\def\svgwidth{.4\textwidth}
\hspace{2cm}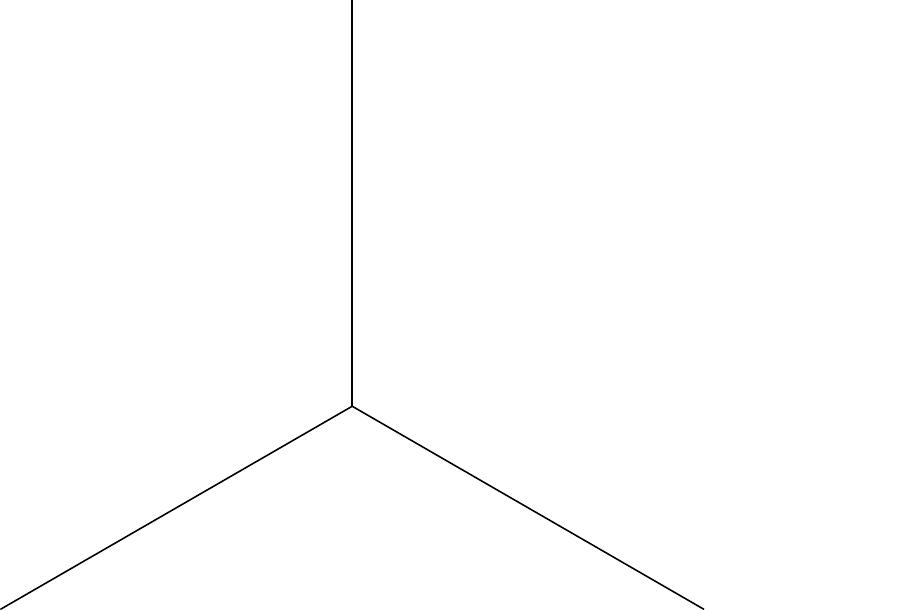
\caption{The propeller partition $B_1,B_2,B_3\subset\R^2$ that maximizes $F(\A)$.
}
\end{figure}

The computation of $C_2=1/\pi$ is an elementary rearrangement argument \cite[Corollary 3.4]{KN}.  $C_3=9/(8\pi)$ was demonstrated in \cite[Corollary 3.4]{KN} via Lagrange multipliers, which in particular implies that
\begin{equation}\label{cst}
C\geq C_N\geq 9/(8\pi),\qquad\forall\,N\geq3.
\end{equation}
$C_4=9/(8\pi)$ was proven with a computer-assisted epsilon net traversal in \cite{HJN}.  In this paper, we prove $C_4=9/(8\pi)$ without computer assistance in Section \ref{sec:analytic-three}.  As in \cite{openai26}, we then show that $C_N=C_4$ for all $N>4$, which proves Theorem \ref{intro:main}.

Part of the interest of Theorem \ref{intro:main} is its different formulations, one of which is a variant of the positive semidefinite Grothendieck inequality \cite{groth53,rietz74}.

\begin{theorem}[{\cite{KN,khot13}}]\label{kntheorem}  The following are equivalent:
\begin{itemize}
\item (Geometric Form) $C=9/(8\pi)$.  That is, for any $n,N\geq1$, for any partition $(A_1,\ldots,A_N)$ of $\R^n$, we have $\sum_{i=1}^{N}\|\int_{A_i}xd\gamma_{n}(x)\|^{2}\leq 9/(8\pi)$, 
\item (Probabilistic Form) Let $(g_1,\ldots,g_n)$ be a mean zero Gaussian random vector.  Then
\begin{equation}\label{vecform}
\quad
(\E \max_{i=1,\ldots,n}g_i)^2\leq\frac{9}{8\pi}\sum_{i=1}^{n}\E g_i^2,\quad\forall\, n\geq1.
\end{equation}
\item (Analytic Form) Let $(a_{ij})$ be an $n\times n$ real symmetric positive semidefinite matrix with $\sum_{i,j=1}^n a_{ij}=0$.  Then, $\forall$ $k\geq3$ and for every $v_1,\ldots,v_k\in S^{k-1}$ with $\sum_{i=1}^{k}v_i=0$,
\begin{equation}\label{svdgro}
\max_{w_1,\ldots,w_n\in S^{n-1}}
\sum_{i,j=1}^n a_{ij}\langle w_i,w_j\rangle
\leq\frac{8\pi}{9}\Big(1-\frac{1}{k}\Big)\cdot
\max_{\sigma\colon\{1,\ldots,n\}\to\{1,\ldots,k\}}
\sum_{i,j=1}^n a_{ij}\langle v_{\sigma(i)},v_{\sigma(j)}\rangle.
\end{equation}
Moreover, the constant $\frac{8\pi}{9}(1-1/k)$ cannot be made smaller.
\item (Complexity Theoretic Form)  Consider the following optimization problem. The input is an $n\times n$ symmetric positive semidefinite centered matrix $(a_{ij})$, and also a $k\times k$ symmetric positive semidefinite matrix $(b_{ij})$ with $k\geq3$, $b_{ii}=1$ $\forall$ $1\leq i\leq k$ and $\sum_{i,j=1}^{k}b_{ij}=0$. The goal is to compute in polynomial time the partition $S_1,\ldots,S_k$ of $\{1,\ldots,n\}$ that maximizes
$\sum_{i,j=1}^k (\sum_{p,q\in (S_i\times S_j)} a_{pq})b_{ij}$. Then the NP-hardness threshold of this problem is $\frac{8\pi}{9}(1-\frac{1}{k})$.  That is, there is an algorithm that can solve this problem within a multiplicative approximation of $\frac{8\pi}{9}(1-\frac{1}{k})$ in polynomial time, and obtaining any smaller approximation factor is NP-hard.
\end{itemize}
\end{theorem}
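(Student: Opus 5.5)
This is the Khot–Naor equivalence theorem (Theorem \ref{kntheorem}), cited from \cite{KN,khot13}, so I would reconstruct the chain of reductions rather than prove anything new. I would establish Geometric $\Leftrightarrow$ Probabilistic $\Leftrightarrow$ Analytic directly, and then derive the Complexity Theoretic Form from the Analytic Form together with a known Unique Games hardness reduction.

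\emph{Geometric $\Leftrightarrow$ Probabilistic.} Write the mean zero Gaussian vector as $g_i=\langle u_i,G\rangle$ with $G\sim\gamma_n$ and $u_i\in\R^n$, so that $\sum_i\E g_i^2=\sum_i\|u_i\|^2$. Let $A_i=\{x\colon \langle u_i,x\rangle \text{ is maximal}\}$, breaking ties measurably. Then
$$\E\max_i g_i=\sum_i\langle u_i,z_i\rangle\le\Big(\sum_i\|u_i\|^2\Big)^{1/2}\Big(\sum_i\|z_i\|^2\Big)^{1/2}$$
by Cauchy--Schwarz, so the Geometric Form implies \eqref{vecform}. Conversely, given a partition, set $u_i=z_i$. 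For every $x$ we have $\max_j\langle z_j,x\rangle\ge\langle z_i,x\rangle$ on $A_i$, hence $\E\max_j g_j\ge\sum_i\langle z_i,z_i\rangle=F(\A)$. Now \eqref{vecform} gives $F(\A)^2\le\frac{9}{8\pi}F(\A)$, which is the Geometric Form.

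\emph{Analytic Form.} Write $a_{ij}=\langle x_i,x_j\rangle$ with $\sum_i x_i=0$. The left side of \eqref{svdgro} is then $\max_w\|\sum_i x_i\otimes w_i\|^2$. For the lower bound on the right side, I would use Gaussian rounding. Take the optimal $w_i$, draw a Gaussian matrix $G$, and assign $\sigma(i)=\arg\max_j\langle v_j,Gw_i\rangle$. This realizes a Voronoi-type partition $A_j$ of $\R^k$ with respect to the simplex vertices $v_j$. The expected value of the right side becomes a quadratic form whose kernel is $\E\langle v_{\sigma(i)},v_{\sigma(j)}\rangle$, and this kernel is a function of $\langle w_i,w_j\rangle$. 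Its Hermite/Taylor expansion has nonnegative coefficients and linear coefficient $\|\sum_j z_j\otimes v_j\|^2$, where $z_j=\int_{A_j}x\,d\gamma$. Positive semidefiniteness plus centering lets us drop the constant term and keep only the nonnegative higher-order terms, leaving a factor of linear coefficient times left side. The remaining ingredient is the computation showing that the best partition, using the Geometric Form together with the symmetry of the simplex, gives linear coefficient $\frac{9}{8\pi}\cdot\frac{k}{k-1}$. This is the step I expect to be the main obstacle: one must verify that simplex vertices $v_j$, with $\sum v_j=0$ and $\|v_j\|=1$, are compatible with the propeller optimizer, via $\|\sum_j z_j\otimes v_j\|^2\le\sum\|z_j\|^2\cdot\|\sum v_jv_j^T\|$ and $\sum_j v_jv_j^T=\frac{k}{k-1}\cdot$(projection) after normalization.

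\emph{Sharpness and the reverse direction.} For the converse and for optimality of the constant, I would take $a_{ij}$ to be a discretization of the Gaussian kernel $\langle x,y\rangle$ on $\R^n$, centered. The right side then equals $\sup_\A\|\sum z_j\otimes v_j\|^2$. Comparing this with the left side, which is $\approx 1$ after normalization, shows that any improvement of $C$ would change the constant, and conversely.

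\emph{Complexity Theoretic Form.} The algorithm is the semidefinite program followed by the rounding above; the $(b_{ij})$ correspond to the Gram matrix of the $v_j$. Hardness follows from the Unique Games reduction of \cite{KN}: dictatorship tests built from the propeller partition, with soundness analysed by the Majority-is-Stablest/invariance principle applied to the Geometric Form. I would cite this reduction rather than reprove it.
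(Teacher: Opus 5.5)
\textbf{There is a genuine gap in the Analytic Form.} Your Geometric $\Leftrightarrow$ Probabilistic argument is correct; it is essentially Lemma \ref{knlem}. Your rounding and discretization scheme for the Analytic Form follows the route behind Theorem \ref{knthm}, which the paper cites rather than reproves. The problem is in how you instantiate it.

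As written, your rounding $\sigma(i)=\arg\max_j\langle v_j,Gw_i\rangle$ uses the Voronoi cells of the $v_j$. For the regular simplex its linear coefficient is $\|\sum_j z_j\otimes v_j\|^2=\frac{k}{k-1}F(\text{simplicial partition})$. This is strictly below $\frac{k}{k-1}\cdot\frac{9}{8\pi}$ for every $k\ge4$. For $k=4$ it is the tetrahedral critical point of Lemma \ref{an:classification}, and it tends to $0$ as $k\to\infty$. So this rounding cannot produce the constant $\frac{8\pi}{9}(1-\frac1k)$. You must instead round with a maximizer of $\A\mapsto\|\sum_j z_j\otimes v_j\|^2$, e.g.\ a propeller placed on three labels with the other cells empty.

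The inequality you single out as the remaining ingredient, $\|\sum_j z_j\otimes v_j\|^2\le\sum_j\|z_j\|^2\,\|\sum_jv_jv_j^T\|$, also points the wrong way. The rounding needs the lower bound $\sup_\A\|\sum_j z_j\otimes v_j\|^2\ge\frac{9}{8\pi}\cdot\frac{k}{k-1}$. That comes from exhibiting the propeller and uses no part of the Geometric Form. The upper bound $C\le 9/(8\pi)$ is needed only for the claim that the constant cannot be made smaller.

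\textbf{The missing idea is the reduction to the regular simplex.} The $v_j$ in \eqref{svdgro} are arbitrary unit vectors with $\sum_j v_j=0$, not regular simplex vertices. The matrix $\sum_j v_jv_j^T$ has trace $k$ and rank at most $k-1$, so its norm is at least $\frac{k}{k-1}$, with equality only for the regular simplex. Your normalization step therefore fails in general.

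The paper's proof of Remark \ref{sdrk} handles exactly this. Write $\widetilde B=(\langle v_i,v_j\rangle)$.
\begin{itemize}
\item Since $\sum_{i\ne j}\widetilde b_{ij}=-k$, we get $\frac{1}{k!}\sum_P P^T\widetilde BP=B$, the regular simplex Gram matrix.
\item $C(\cdot)$ is a supremum of linear functionals of the matrix and is invariant under relabeling. Hence $C(\widetilde B)\ge C(B)=\frac{k}{k-1}C(I_k)$, where the equality holds because $\sum_j z_j=0$.
\item Concretely, the propeller on a triple $\{i,j,\ell\}$ has value $\frac{3}{8\pi}(3-\widetilde b_{ij}-\widetilde b_{i\ell}-\widetilde b_{j\ell})$. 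Its average over all triples is $\frac{9}{8\pi}\cdot\frac{k}{k-1}$, so some triple does at least this well.
\end{itemize}
Theorem \ref{knthm} with constant $1/C(\widetilde B)$ then gives \eqref{svdgro}, and sharpness only has to be checked at $\widetilde B=B$.

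The same reduction is needed for the algorithmic half of the Complexity Form, whose $(b_{ij})$ is an arbitrary such Gram matrix. Note also that the reduction of \cite{KN} yields Unique-Games hardness. NP-hardness therefore relies on the Unique Games Conjecture, which the paper takes as proven in \cite{openai26}.
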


\begin{remark}
The uniqueness from Remark \ref{uniquerk} implies analogous uniqueness statements for the probabilistic and analytic forms in Theorem \ref{kntheorem}.  For example, equality holds in \eqref{vecform} only when $n\geq3$ and the covariance matrix of $(g_1,\ldots,g_n)$ is (up to reordering) a multiple of
\[
\frac{1}{2}\begin{pmatrix}
2 & -1 & -1 & 0 & \cdots & 0\\
-1 & 2 & -1 & 0 & \cdots & 0\\
-1 & -1 & 2 & 0 & \cdots & 0\\
0 & 0 & 0 & 0 & \cdots & 0\\
\vdots &\vdots  &\vdots  & \ddots & \vdots & 0\\
0 & 0 & 0 & 0 & \cdots & 0\\
\end{pmatrix}
\]
Asymptotic equality in \eqref{svdgro} only occurs as $n\to\infty$ for fixed $v_1,\ldots,v_k\in S^{k-1}$ when $\langle v_i,v_j\rangle=-1/(k-1)$, $\forall$ $1\leq i<j\leq k$, and when $(a_{ij})$ is not the zero matrix.  For justification, see \cite[Sections 3.1, 4]{khot13}, \cite[page 10]{KN} and the proof of Remark \ref{sdrk} in Section \ref{fd:foundations}.
\end{remark}

The geometric form of Theorem \ref{kntheorem} is closely related to the Standard Simplex Conjecture \cite{isaksson11}, which has been verified for a majority of correlation parameters in \cite{heilman26} for three sets, although the four set and higher cases remain open.  The principal difference is that the Standard Simplex Conjecture constrains the measures of the partition elements, whereas the Propeller problem does not.  (Also \cite{isaksson11} concerns noise stability, a generalization of $F(\A)$.)

The probabilistic form of Theorem \ref{kntheorem} is similar to but distinct from the recently resolved Simplex Mean Width Conjecture \cite{mulgund26}, and the Sudakov-Fernique inequality \cite{sudakov71,sudakov76,fernique75}.  In particular, the variance constraints are different, e.g. the former says:
let $n\geq2$ and let $(g_1,\ldots,g_n)$ be a mean zero Gaussian random vector with $\E g_i^2=1$ $\forall$ $1\leq i\leq n$.  Then $\E \max_{i=1,\ldots,n}g_i$ is largest when $\E g_i g_j=-1/(n-1)$ for all $1\leq i<j\leq n$.

The complexity theoretic form of Theorem \ref{kntheorem} originally assumed the Unique Games Conjecture.  Since that conjecture is now proven \cite{openai26}, we do not need it as an assumption.  This optimization problem, known as the kernel clustering problem from machine learning, originated in \cite{song07}.

The analytic form of Theorem \ref{kntheorem} contrasts with the positive semidefinite Grothendieck inequality with sharp constant \cite{rietz74}, which is the analogue of \eqref{svdgro} when $k=2$:

\[
\max_{w_1,\ldots,w_n\in S^{n-1}}
\sum_{i,j=1}^n a_{ij}\langle w_i,w_j\rangle
\leq\frac{\pi}{2}\cdot
\max_{y_{1},\ldots,y_{n}\in\{-1,1\}}
\sum_{i,j=1}^n a_{ij}y_iy_j.
\]

In fact, \cite{khot13} studied a more general problem:
\begin{theorem}[{\cite{khot13}}]\label{knthm}
Let $(a_{ij})$ be an $n\times n$ real symmetric positive semidefinite matrix.  Let $k\geq2$ and let $v_1,\ldots,v_k\in \R^{k}$ be not all equal.  Let $B=(b_{ij})\colonequals(\langle v_i,v_j\rangle)$.  Then
\[
\max_{w_1,\ldots,w_n\in S^{n-1}}
\sum_{i,j=1}^n a_{ij}\langle w_i,w_j\rangle
\leq\frac{1}{C(B)}\cdot
\max_{\sigma\colon\{1,\ldots,n\}\to\{1,\ldots,k\}}
\sum_{i,j=1}^n a_{ij}\langle v_{\sigma(i)},v_{\sigma(j)}\rangle,\,\,\text{where}
\]
\[
C(B)\colonequals\sup_{\A=(A_1,\ldots,A_k)\subset\R^{k-1}}\sum_{i,j=1}^{k}b_{ij}\Big\langle\int_{A_i}xd\gamma_{k-1}(x),\,\int_{A_j}xd\gamma_{k-1}(x)\Big\rangle>0.
\]
Moreover, the constant $1/C(B)$ cannot be improved in the above inequality, even under the additional assumption that $\sum_{i,j=1}^n a_{ij}=0$.
\end{theorem}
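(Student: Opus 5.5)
The approach is the standard Gaussian rounding argument of Khot--Naor, carried out in two halves: an upper bound (the inequality) and a matching lower bound (sharpness, including the centered case).

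\emph{Upper bound.} Fix unit vectors $w_1,\ldots,w_n\in S^{n-1}$ attaining the left side. The natural comparison is with the best partition. Let $\A=(A_1,\ldots,A_k)$ be any measurable partition of $\R^{k-1}$; since $v_1,\ldots,v_k$ span at most a $(k-1)$-dimensional affine set, I will first reduce, if needed, to vectors in $\R^{k-1}$ after a rotation (translation invariance is not available for $B$ itself, so this step needs care). Take a random Gaussian matrix $G\colon\R^n\to\R^{k-1}$ with i.i.d.\ $N(0,1)$ entries and set $\sigma(i)=j$ when $Gw_i\in A_j$. Then
\[
\E\sum_{p,q}a_{pq}\langle v_{\sigma(p)},v_{\sigma(q)}\rangle=\sum_{p,q}a_{pq}\sum_{i,j}b_{ij}\,\P(Gw_p\in A_i,\,Gw_q\in A_j).
\]
The key identity is to expand $f_{ij}(t)=\P(Gw_p\in A_i,Gw_q\in A_j)$ with $t=\langle w_p,w_q\rangle$ in multivariate Hermite polynomials, namely $\sum_{\alpha}t^{|\alpha|}\widehat{1_{A_i}}(\alpha)\widehat{1_{A_j}}(\alpha)$. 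The degree-one term equals $t\langle z_i,z_j\rangle$, where $z_i=\int_{A_i}x\,d\gamma_{k-1}$. Each degree-$d$ term gives a matrix $\sum_{i,j}b_{ij}\widehat{1_{A_i}}\widehat{1_{A_j}}$, which is PSD in the Hermite index because $B$ is a Gram matrix. Applying the Schur product theorem to $(a_{pq})\circ(\langle w_p,w_q\rangle^{d})$ then shows that every degree-$d$ contribution summed against $a_{pq}$ is nonnegative. The degree-zero term is $\sum_{p,q}a_{pq}\,\|\sum_i p_iv_i\|^2$, which is also nonnegative since $(a_{pq})$ is PSD. Dropping all terms except degree one leaves
\[
\max_\sigma\sum a_{pq}\langle v_{\sigma(p)},v_{\sigma(q)}\rangle\geq \Big(\sum_{i,j}b_{ij}\langle z_i,z_j\rangle\Big)\sum_{p,q}a_{pq}\langle w_p,w_q\rangle.
\]
Taking the supremum over $\A$ yields the inequality with constant $1/C(B)$. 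Positivity of $C(B)$ follows by choosing $\A$ to be two halfspaces assigned to two indices $i\ne j$ with $v_i\ne v_j$: the quantity becomes $\|v_i-v_j\|^2/(2\pi)>0$, after checking that $z_i=-z_j$ gives $b_{ii}+b_{jj}-2b_{ij}$ times $\|z_i\|^2$.

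\emph{Sharpness.} Take $A$ to be the Gram-type kernel of a fine discretization of Gaussian space. Concretely, for $m$ large, set $a_{pq}=\langle x_p,x_q\rangle/N$ with $x_1,\ldots,x_N$ i.i.d.\ standard Gaussian in $\R^{m}$, and then center it by subtracting the mean vector. The left side is at least $\sum_{p,q}a_{pq}\langle x_p/\|x_p\|,x_q/\|x_q\|\rangle$, which tends to $(\E\|x\|)^2/m\to1$ per normalization. For the right side, any $\sigma$ induces a partition of the sample points, and $\sum_{p,q}a_{pq}\langle v_{\sigma(p)},v_{\sigma(q)}\rangle\approx\sum_{i,j}b_{ij}\langle z_i,z_j\rangle$ for a partition of $\R^m$. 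Here one must show that the supremum over partitions of $\R^m$ with $k$ parts equals the supremum over partitions of $\R^{k-1}$. This follows because an optimal partition depends only on the projection to the span of $z_1,\ldots,z_k$, which has dimension at most $k-1$ when the $z_i$ sum to zero. Centering costs nothing because $\sum_i z_i=\int x\,d\gamma=0$.

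The main obstacle is the uniform approximation in the sharpness step: controlling the maximum over all $k^N$ maps $\sigma$ by partitions of Gaussian space. I would handle it with a VC/uniform law-of-large-numbers argument applied to the bilinear quantity $\|\frac1N\sum_{p\in S_i}x_p\|$, or alternatively use a dimension-reduction argument (Gaussian-space dictatorship tests as in Khot--Naor), with $m\to\infty$ taken after $N\to\infty$.
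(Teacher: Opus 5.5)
The paper gives no proof of this theorem; it is quoted from \cite{khot13}, so your proposal has to stand on its own.

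\textbf{Upper bound.} This is the standard Khot--Naor Gaussian rounding argument, and it is correct. Stated precisely: put $f=\sum_i 1_{A_i}v_i$. Mehler's formula gives $\E\langle f(Gw_p),f(Gw_q)\rangle=\sum_{d\ge0}c_d\langle w_p,w_q\rangle^d$, where
\begin{itemize}
\item $c_d=\sum_{|\alpha|=d}\|\sum_i\widehat{1_{A_i}}(\alpha)v_i\|^2\ge0$ (this is what ``PSD in the Hermite index'' should say);
\item $c_0=\|\sum_ip_iv_i\|^2$;
\item $c_1=\sum_{i,j}b_{ij}\langle z_i,z_j\rangle$.
\end{itemize}
Schur's theorem makes every $\sum_{p,q}a_{pq}\langle w_p,w_q\rangle^d$ nonnegative. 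So dropping $d\neq1$ and taking the supremum over $\A$ is legitimate. Your two-halfspace computation correctly gives $C(B)>0$. Delete the preliminary ``reduction of the $v_i$ to $\R^{k-1}$'': it is not needed. The number $k-1$ is the dimension of the Gaussian space carrying the partition, and nothing constrains where the $v_i$ live.

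\textbf{Sharpness: the gap.} The gap sits exactly at the step you flag, and your plan for it does not work as stated. A VC or uniform law of large numbers argument applied to $\|\frac1N\sum_{p\in S_i}x_p\|$ over all $\sigma$ has nothing to grab onto. The sets $S_i$ range over all subsets of the sample, which is not a class of bounded complexity. Dictatorship tests are the tool for the hardness statement, not for this integrality-gap construction.

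What is missing is a reduction of the $k^N$ assignments to a low-complexity family. Two ways to get it:
\begin{itemize}
\item \emph{First variation.} $\Psi(\bar z_1,\ldots,\bar z_k)=\|\sum_iv_i\otimes\bar z_i\|^2$ is convex in the fractional assignment. Hence the empirical maximum is attained at an assignment with $\sigma(p)\in\arg\max_i\langle u_i,x_p\rangle$, where $u_i=\sum_jb_{ij}\bar z_j$. Such assignments are cut out, up to ties, by $k$-cell linear argmax partitions of $\R^m$, which form a VC class. Glivenko--Cantelli with envelope $\|x\|$ then gives $\max_\sigma\Psi_N\le\sup_{\A\subset\R^m}\Psi(\A)+o(1)$ almost surely.
\item \emph{Support functions (cleaner).} The set of empirical fractional moment tuples has support function $(u_i)_i\mapsto\frac1N\sum_p\max_i\langle u_i,x_p\rangle$. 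This converges uniformly on compacts to $\E\max_i\langle u_i,G\rangle$, the support function of the set of Gaussian moment tuples of fractional partitions. Hausdorff convergence plus continuity of $\Psi$ gives the same bound.
\end{itemize}

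\textbf{Dimension reduction.} Your step $\sup_{\A\subset\R^m}\Psi=C(B)$ presupposes that a maximizer exists. You can prove existence (weak-$*$ compactness plus convexity) and then run the first variation with $u_i=\sum_jb_{ij}z_j$. Alternatively, replace each $1_{A_i}$ by its conditional expectation given the projection onto $\mathrm{span}\{z_1,\ldots,z_k\}$. This preserves every $z_i$, and you then return to a genuine partition by convexity and Lyapunov's theorem.
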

Theorem \ref{intro:main} then corresponds to finding the supremum over $k$ of $C(B)$ when $B$ is the $k\times k$ identity matrix.  For general $B$, there is no conjectured simple formula for $C(B)$.
%
%

\begin{remark}\label{sdrk}
\eqref{svdgro} follows from Theorems \ref{knthm} and \ref{intro:main} despite the different assumptions.
\end{remark}
We postpone the proof of Remark \ref{sdrk} to Section \ref{fd:foundations}.

\subsection{Sketch of the Proof of the Main Theorem}

The proof proceeds in two steps.
\begin{itemize}
\item[(i)] If $\A$ maximizes $F$ in $\R^n$, then $\A$ must be a partition with at most four sets of positive measure.  Using known reductions from \cite{KN}, this implies that we may assume $\A\subset\R^3$.  This is proven in Lemma \ref{dimlem}.
\item[(ii)] In $\R^3$ there are only two nondegenerate critical points of $F$ with four sets of positive measure (up to rotations, measure zero changes, and relabeling the sets), and both critical points satisfy $F(\A)<9/(8\pi)$.  This is proven in Section \ref{sec:analytic-three}.  (Here nondegenerate means that the span of the vectors $z_1,\ldots,z_4$ has dimension $3$.)
\end{itemize}
Steps (i) and (ii) prove Theorem \ref{intro:main}, since they together imply that $C_N\leq C_4$ for $N>4$.

The proof of (i) uses a sequence of inequalities that must hold for maximizing critical points.  The argument of \cite{HJN} used a similar strategy, although some parts of \cite{HJN} were specific to $\R^3$ and the strategy was used differently.

The proof of (ii) begins with the reformulated critical point condition \cite[Lemma 3.14, Corollary 3.15]{HJN}, itself inspired by rubber band configurations on the sphere \cite{leech57,con82,linial88,cohn10}.  We then completely classify both nondegenerate critical points for four sets in $\R^3$.  The proof uses some ideas from \cite[Theorem 3.2]{laurent}.

\subsection{Future Directions}  Theorem \ref{intro:main} asserts that $C(I_k)=C(I_3)$ for all $k\geq3$, where $I_k$ is the $k\times k$ identity matrix.  In other words, the computation of $C(I_k)$ reduces to a lower-dimensional optimization problem.  If $B$ is close to the identity matrix, it should be the case that $C(B)$ also reduces to a lower-dimensional optimization problem.  Making this statement precise with a good notion of closeness remains an open problem.

The measure-constrained version of the Propeller Problem still seems out of reach of current methods.  This problem asks to maximize $F(\A)$ in $\R^n$ subject to the constraint that $\gamma_n(A_i)=1/N$ for all $1\leq i\leq N$.  The case $N=n+1$ is most interesting.

The proof of the three-dimensional case of the Propeller Problem demonstrates that the nondegenerate critical points of $F$ are identical to the nondegenerate critical points of four points on the sphere $S^2$ connected by rubber bands \cite[Lemma 3.14, Corollary 3.15]{HJN}.  Lemma \ref{an:classification} gives a complete classification of these nondegenerate critical points.  It would be interesting to try to generalize this result to at least $n+1$ points on the sphere $S^{n-1}$.

\subsection{Statement on AI Use}

GPT 6 was used to prepare this manuscript, which is based upon the solution of the Propeller Conjecture from \cite{openai26}.  Before the release of \cite{openai26}, we had a somewhat different proof of Theorem \ref{intro:main} that was longer and instead relied on the Gaussian isoperimetric inequality and a second variation argument.

\section{First Variation Arguments}
\label{fd:foundations}

\begin{lemma}
\label{fd:attainment}
Fix $n\geq1$, $N\geq2$. There exists a measurable partition $\A\subset\R^n$ into $k\leq N$ sets maximizing $F(\A)$ among partitions of $\R^n$ into $N$ sets.  Let $k\leq N$ and let $\A=(A_1,\ldots,A_k)$ be any such maximizing partition, such that $\gamma_n(A_i)>0$ for all $1\leq i\leq k$.
Then
\begin{equation}\label{fd:obtuse-identities}
 \sum_{i=1}^k z_i=0,\qquad \langle z_i,z_j\rangle<0\quad\forall\,1\leq i<j\leq k.
\end{equation}
The span of $z_1,\ldots,z_k$ has dimension $k-1$, and, up to sets of measure zero,
\begin{equation}\label{fd:winning-cones}
 A_i
 =\{x\in\R^n\colon \langle z_i,x\rangle=\max_{j=1,\ldots,k}\langle z_j,x\rangle\}.
\end{equation}
After applying a rotation to $\R^n$ and projecting onto the span of $z_1,\ldots,z_k$, we may assume that $n=k-1$, and $k\geq2$.  That is, $\max_{\A\subset\R^n,|\A|\leq N}F(\A)=\max_{\A\subset\R^{k-1},|\A|=k}F(\A)$.
\end{lemma}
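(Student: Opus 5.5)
The plan has four steps: get existence from a linear reformulation, read off the cone structure from a first variation, derive the identities \eqref{fd:obtuse-identities} by local perturbations, and finally reduce the dimension.

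\emph{Existence.} I would use the dual description
\[
F(\A)=\max_{\|y_i\|\le 1}\sum_i\langle y_i,z_i\rangle=\max_{y}\int\sum_i y_i\cdot x\,1_{A_i}(x)\,d\gamma_n(x).
\]
For fixed vectors $y_1,\dots,y_N$ with $\|y_i\|\le1$, the integral is maximized by assigning each $x$ to an index maximizing $\langle y_i,x\rangle$. Therefore
\[
\sup_\A F=\sup_{\|y_i\|\le1}\int\max_i\langle y_i,x\rangle\,d\gamma_n.
\]
This is the supremum of a continuous function over a compact set, so it is attained at some $y^*$. The induced cone partition attains the value, since $F(\A)\ge\sum\langle y_i^*,z_i\rangle$. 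Discarding null sets leaves $k\le N$ sets of positive measure.

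\emph{Cone form \eqref{fd:winning-cones}.} Let $\A$ be a maximizer with all $p_i>0$. For any measurable $E\subset A_j$ of small positive measure, moving $E$ into $A_i$ changes $F$ by
\[
2\Big\langle z_i-z_j,\int_E x\,d\gamma_n\Big\rangle+O\big(\|\textstyle\int_E x\,d\gamma_n\|^2\big).
\]
By a Lebesgue density argument, maximality then forces $\langle z_j,x\rangle\ge\langle z_i,x\rangle$ for a.e.\ $x\in A_j$. This gives \eqref{fd:winning-cones} up to ties. The tie sets $\{\langle z_i-z_j,x\rangle=0\}$ are null unless $z_i=z_j$. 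If $z_i=z_j$ for some $i\neq j$, then merging $A_i$ and $A_j$ changes $F$ by $\|2z_i\|^2-2\|z_i\|^2=2\|z_i\|^2$. This is strictly positive once $z_i\ne0$, which contradicts maximality.

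\emph{The identities \eqref{fd:obtuse-identities}.}
\begin{itemize}
\item $z_i\neq 0$. If $z_i=0$, then $A_i$ lies in the null set $\{\max_j\langle z_j,x\rangle=0\}\cap\{\cdots\}$, which contradicts $p_i>0$. More concretely, one can split $A_i$ by a hyperplane into two pieces with nonzero opposite moments; merging each piece into a suitable neighbor then increases $F$.
\item $\sum_i z_i=0$. The $A_i$ partition $\R^n$ and $\int x\,d\gamma_n=0$.
\item $\langle z_i,z_j\rangle<0$. Merging $A_i$ and $A_j$ changes $F$ by $2\langle z_i,z_j\rangle$, so maximality gives $\langle z_i,z_j\rangle\le 0$. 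Merging would leave fewer sets, which is still admissible. Equality $\langle z_i,z_j\rangle=0$ is the delicate case. There merging keeps $F$ maximal but yields a maximizer with fewer sets whose cone structure must again satisfy \eqref{fd:winning-cones}. The merged set $A_i\cup A_j$ is not convex, because the cone of $z_i$ union the cone of $z_j$ is not the cone of $z_i+z_j$ when $z_i\neq z_j$. This contradicts \eqref{fd:winning-cones} for the new maximizer.
\end{itemize}

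\emph{Dimension and projection.} Suppose $\sum_i c_i z_i=0$ is a dependency not proportional to $(1,\dots,1)$. I expect the cleanest route is to note that the $k$ vectors with pairwise negative inner products have only the relation $\sum_i z_i=0$ up to scaling. This is a standard fact: split $c$ into positive and negative parts, $u=\sum_{c_i>0}c_iz_i=\sum_{c_j<0}|c_j|z_j$. Then $\|u\|^2=\sum c_i|c_j|\langle z_i,z_j\rangle<0$ unless one side is empty. In that case all $c_i$ have one sign, so they are proportional to a positive vector, and combining with $\sum_i z_i=0$ yields proportionality. Hence the span has dimension $k-1$.

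Since each $A_i$ is a cone defined by linear functionals $\langle z_i,\cdot\rangle$ that factor through the projection $P$ onto $V=\mathrm{span}(z_i)$, each set has the form $A_i=P^{-1}(A_i')$. Gaussian moments project correctly: $\int_{A_i}x\,d\gamma_n=\int_{A_i'}y\,d\gamma_{k-1}(y)$, because the orthogonal component integrates to zero. Rotating $V$ to $\R^{k-1}$ gives $n=k-1$ and the final equality of maxima. The case $k\ge2$ holds because $k=1$ gives $F=0$, which is less than the value of any half-space split.

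The main obstacle is the strictness $\langle z_i,z_j\rangle<0$, i.e.\ ruling out orthogonal pairs via the merging/nonconvexity argument.
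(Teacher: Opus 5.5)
You have the same architecture as the paper, which cites \cite{KN} for existence, the cone form and the span, and proves only the strict inequality itself. First variation gives \eqref{fd:winning-cones}, merging gives $\langle z_i,z_j\rangle\le0$, and in the orthogonal case merging produces a new maximizer whose cone form should be contradicted. That last contradiction, which is the one step the paper actually proves, is not established in your proposal.

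Your claim that $A_i\cup A_j$ is not convex is false in general. The four quadrants of $\R^2$ satisfy \eqref{fd:winning-cones} with $z_1=\frac{1}{2\sqrt{2\pi}}(1,1)$ and $z_2=\frac{1}{2\sqrt{2\pi}}(-1,1)$. So $\langle z_1,z_2\rangle=0$ and $z_1\neq z_2$, yet $A_1\cup A_2$ is the upper half-plane. Hence nonconvexity cannot follow from $z_i\neq z_j$, the cone form and orthogonality, which is all you invoke. Convexity is also beside the point. What contradicts \eqref{fd:winning-cones} for the merged maximizer is that $A_i\cup A_j$ differs on a set of positive measure from the specific cone $\{x\colon\langle z_i+z_j,x\rangle\ge\max_{\ell\neq i,j}\langle z_\ell,x\rangle\}$; in the example this cone is $\{x_2\ge|x_1|/3\}$. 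Your ``because'' clause only asserts this.

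The paper proves it with a ray. Let $x(s)=z_i-sz_j$ for $s\ge0$.
\begin{itemize}
\item $\langle z_i,x(s)\rangle=\|z_i\|^2$ is constant, and $\langle z_j,x(s)\rangle=-s\|z_j\|^2$.
\item For $\ell\neq i,j$, $\langle z_\ell,x(s)\rangle$ is nondecreasing and $\le0$ at $s=0$.
\item At least one of these is strictly increasing, because $\sum_{\ell\neq i,j}\langle z_\ell,z_j\rangle=-\|z_j\|^2<0$ by orthogonality and $\sum_\ell z_\ell=0$.
\item At the first $s_*>0$ where some $\ell$ ties with $i$, you get $\langle z_\ell,x(s_*)\rangle=\|z_i\|^2>\|z_i\|^2-s_*\|z_j\|^2=\langle z_i+z_j,x(s_*)\rangle$.
\item So for $s<s_*$ close to $s_*$, a small ball around $x(s)$ lies a.e.\ in $A_i$ but strictly prefers $z_\ell$ to $z_i+z_j$. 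Moving that ball into $A_\ell$ increases $F$ of the merged maximizer, a contradiction.
\end{itemize}

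Separately, your existence step uses the wrong dual. $\max_{\|y_i\|\le1}\sum_i\langle y_i,z_i\rangle$ equals $\sum_i\|z_i\|$, not $\sum_i\|z_i\|^2$, and the inequality $F(\A)\ge\sum_i\langle y_i^*,z_i\rangle$ does not hold in general. Use $\sqrt{F(\A)}=\max\{\sum_i\langle y_i,z_i\rangle\colon\sum_i\|y_i\|^2\le1\}$ instead; this is the form behind Lemma \ref{knlem}. Compactness and continuity of $y\mapsto\int\max_i\langle y_i,x\rangle\,d\gamma_n(x)$ then give attainment as you intended.

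Your cone-form derivation, the sign argument for the span being $(k-1)$-dimensional, and the projection step are fine.
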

\begin{proof}  
Under the additional assumption that $\A\subset\R^n$ has the \textit{minimal} number of sets $k$ that maximizes $F(\A)$ in $\R^n$, this Lemma is proven in \cite[Lemmas 3.1--3.3]{KN}.  We therefore prove the Lemma without this assumption.  It suffices to show that $\langle z_i,z_j\rangle<0$ for all $1\leq i<j\leq k$, since the remaining steps of \cite[Lemmas 3.1--3.3]{KN} apply verbatim.  Also, since moving a set $E\subset A_i$ into $A_j$ increases $F$ by $2\int_{E}\langle z_j - z_i,x\rangle d\gamma_{n}(x)+2\|\int_{E}xd\gamma_n(x)\|^2$,  we have that $z_1,\ldots,z_k$ are distinct (otherwise choose $E$ with $\int_{E}xd\gamma_n(x)\neq0$ to increase $F$) and $A_i
 \subset\{x\in\R^n\colon \langle z_i,x\rangle=\max_{j=1,\ldots,k}\langle z_j,x\rangle\}$ $\forall$ $1\leq i\leq k$, so the $z_1,\ldots,z_k$ are all nonzero (since $F(\A)>0$, some $z_j$ is nonzero, hence at least two of them are nonzero since $\sum_{j=1}^{k}z_j=0$, so if $z_i=0$ then \eqref{fd:winning-cones} implies $\gamma_n(A_i)=0$, but this contradicts the assumption $\gamma_{n}(A_i)>0$).

Suppose $\A=(A_1,\ldots,A_k)\subset\R^{n}$ maximizes $F(\A)$ on $\R^{n}$.  It cannot occur that $\langle z_i,z_j\rangle>0$ for some $1\leq i<j\leq k$, since otherwise we could combine $A_i,A_j$ into one set $A_i\cup A_j$ whose moment $z_i+z_j$ would satisfy $\|z_i+z_j\|^2>\|z_i\|^2+\|z_j\|^2$, i.e. $\A$ would not maximize $F$.  We further claim: it cannot occur that $\langle z_i,z_j\rangle=0$ for some $1\leq i<j\leq k$.  Otherwise, consider $x(s)\colonequals z_i -sz_j$, $s\geq0$, and note that $\langle z_i,x(s)\rangle=\|z_i\|^2>0$, $\langle z_j,x(s)\rangle=-s\|z_j\|^2$, and for any other $1\leq\ell\leq k$, $\langle z_\ell,x(s)\rangle=\langle z_\ell,z_i\rangle-s\langle z_\ell,z_j\rangle$ is a nondecreasing function of $s$ that is nonpositive at $s=0$.  Also, $\sum_{\ell\neq i,j}\langle z_\ell, z_j\rangle=-\|z_j\|^2<0$ since $\sum_{q=1}^k z_q=0$, so there is at least one $\ell\neq i,j$ where $\langle z_\ell,x(s)\rangle$ is a strictly increasing function of $s$.  Let then $s_*>0$ be the smallest $s$ such that there is an $\ell\neq i,j$ where $\langle z_\ell,x(s)\rangle=\langle z_i,x(s)\rangle=\|z_i\|^2$.  By contrast, if we replace $A_i$ and $A_j$ by $A_i\cup A_j$, then $\langle z_i+z_j,x(s_*)\rangle=\|z_i\|^2 - s_*\|z_j\|^2<\|z_i\|^2$.  So, for $s$ close to $s_*$ with $s<s_*$, we have by \eqref{fd:winning-cones} that $x(s)\in A_i$, but if we replaced $A_i,A_j$ by $A_i\cup A_j$, then $F$ is increased if we merge a neighborhood of $x(s)$ with $A_\ell$.  In other words, $\A$ does not maximize $F(\A)$, a contradiction.  We conclude that \eqref{fd:obtuse-identities} holds.
\end{proof}
\begin{remark}\label{12rk}
\eqref{fd:winning-cones} implies that $p_{i}\leq1/2$ for all $1\leq i\leq k$.
\end{remark}

\begin{lemma}[{\cite[Lemma 3.7]{KN}}]\label{knlem}
Let $G\sim\gamma_{k-1}$.  For any $v_1,\ldots,v_k\in\R^{k-1}$, 
\[
(\mathbb E\max_{i=1,\ldots,k}\langle v_i,G\rangle)^2\le C_k\sum_{i=1}^k\|v_i\|^2,
\]
with equality when $v_i=z_i$ for all $1\leq i\leq k$ for some $\A\subset\R^{k-1}$ with $F(\A)=C_k$.
\end{lemma}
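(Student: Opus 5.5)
The plan is to pass from the vectors $v_i$ to a partition of $\R^{k-1}$, and then use Cauchy--Schwarz together with the definition of $C_k$.

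Fix $v_1,\ldots,v_k\in\R^{k-1}$. Define the ``winning cones''
\[
B_i\colonequals\{x\in\R^{k-1}\colon \langle v_i,x\rangle=\max_{j}\langle v_j,x\rangle\},
\]
and break ties by the smallest index. This makes $\mathcal B=(B_1,\ldots,B_k)$ a measurable partition of $\R^{k-1}$ into $k$ sets. Let $w_i\colonequals\int_{B_i}x\,d\gamma_{k-1}(x)$. Then, pointwise on $B_i$, the maximum equals $\langle v_i,G\rangle$, and so
\[
\E\max_i\langle v_i,G\rangle=\sum_{i=1}^k\int_{B_i}\langle v_i,x\rangle\,d\gamma_{k-1}(x)=\sum_{i=1}^k\langle v_i,w_i\rangle .
\]
Cauchy--Schwarz, first in each $\R^{k-1}$ and then in $\R^k$, gives
\[
\Big(\sum_i\langle v_i,w_i\rangle\Big)^2\le\Big(\sum_i\|v_i\|^2\Big)\Big(\sum_i\|w_i\|^2\Big)=\Big(\sum_i\|v_i\|^2\Big)F(\mathcal B).
\]
Since $\mathcal B$ is a partition of $\R^{k-1}$ into $k$ sets, we have $F(\mathcal B)\le C_k$ by \eqref{mdef}. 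The left side could be negative, but $\E\max_i\langle v_i,G\rangle\ge\max_i\E\langle v_i,G\rangle=0$, so the inequality holds after squaring. This proves the inequality.

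For equality, take $\A\subset\R^{k-1}$ with $F(\A)=C_k$ and $v_i=z_i$. Lemma \ref{fd:attainment} applies here: $\A$ maximizes among partitions into $k$ sets, and if some $A_i$ is null we simply discard it, which gives the same $z$'s with $z_i=0$. Its winning-cone identity \eqref{fd:winning-cones} then says that, up to null sets, $A_i$ is exactly where $\langle z_i,x\rangle$ is maximal. For a null $A_i$ we have $z_i=0$, and it contributes nothing either way. Hence
\[
\E\max_i\langle z_i,G\rangle=\sum_i\int_{A_i}\langle z_i,x\rangle\,d\gamma_{k-1}=\sum_i\|z_i\|^2=F(\A)=C_k .
\]
The right side of the claimed inequality is $C_k\sum_i\|z_i\|^2=C_k^2$, which is the square of the left side. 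So equality holds.

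The only delicate point is this equality case. We need the maximizer to satisfy the cone description \eqref{fd:winning-cones}, and we need to handle null sets and ties. Both are covered by Lemma \ref{fd:attainment} and by the remark that ties occur on a measure zero set.
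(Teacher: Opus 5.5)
Your proof is correct and is essentially the standard argument from \cite{KN} that the paper cites: write $\E\max_i\langle v_i,G\rangle=\sum_i\langle v_i,w_i\rangle$ using the winning cones of the $v_i$, apply Cauchy--Schwarz, and bound $F$ of that partition by $C_k$. For the equality case you don't need Lemma \ref{fd:attainment} or the null-set bookkeeping: since $\max_j\langle z_j,x\rangle\ge\langle z_i,x\rangle$ on $A_i$, you get $\E\max_i\langle z_i,G\rangle\ge\sum_i\|z_i\|^2=C_k$ directly, and together with the inequality already proved this forces equality.
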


\begin{proof}[Proof of Remark \ref{sdrk}]
Note that $\sum_{i,j=1}^k\langle\int_{A_i}xd\gamma_{k-1}(x),\,\int_{A_j}xd\gamma_{k-1}(x)\rangle=\|\int_{\R^{k-1}}xd\gamma_{k-1}(x)\|^2=0$, so if $B$ has diagonal entries $1$ and off-diagonal entries $-1/(k-1)$,
\[
C(B)=C\Big(\frac{k}{k-1}I_{k}-\frac{11^{T}}{k-1}\Big)=\frac{k}{k-1}C(I_k)=\frac{1}{1-1/k}\frac{9}{8\pi},
\]
And if $\widetilde{B}=(\langle v_i, v_j\rangle)$ where $v_1,\ldots,v_k\in S^{k-1}$ and $\sum_{i=1}^{k}v_i=0$, then $\frac{1}{k(k-1)}\sum_{i\neq j}\langle v_i,v_j\rangle=-\frac{k}{k(k-1)}=-\frac{1}{k-1}$ so $\frac{1}{k!}\sum_{P\in S_k}P^T\widetilde{B}P=B$, so by convexity and permutation invariance, $C(B)\leq C(\widetilde{B})$, with equality only when $\widetilde{B}=B$.  To see this, assume $C(B)=C(\widetilde{B})$.  

For any $3$-tuple $i,j,\ell$ of distinct indices in $\{1,\ldots,k\}$, the partition $A_1,\ldots,A_k$ of $\R^{k-1}$ consisting of three $120$ degree sectors $\times \R^{k-3}$ with all other cells empty satisfies $c_{ij\ell}\colonequals\sum_{i',j'=1}^{k}\widetilde{b}_{i'j'}\langle\int_{A_{i'}}xd\gamma_{k-1}(x),\,\int_{A_{j'}}xd\gamma_{k-1}(x)\rangle\leq C(\widetilde{B})$ and $c_{ij\ell}=\frac{3}{8\pi}(3-\widetilde{b}_{ij}-\widetilde{b}_{i\ell}-\widetilde{b}_{j\ell})$, using $\int_{\theta=-\pi/3}^{\pi/3} \int_{r=0}^{\infty}r\cos\theta e^{-r^2 /2}rdrd\theta/(2\pi)=\sqrt{3/(8\pi)}$.  Since $\sum_{i<j}\widetilde{b}_{ij}=-k/2$, averaging gives $(1/\binom{k}{3})\sum_{i<j<\ell}c_{ij\ell}=\frac{3}{8\pi}(3 + 3/(k-1))=C(B)=C(\widetilde{B})$, implying that $c_{ij\ell}=C(B)$ for all $1\leq i<j<\ell\leq k$.  As in \eqref{fd:winning-cones}, the resulting optimality conditions \cite[Lemma 2.3]{khot13} imply that $\langle u_i - u_j,x\rangle=0$ for every $x$ in the interface between $A_i,A_j$, where $u_a\colonequals\sum_{s\in\{i,j,\ell\}}\widetilde{b}_{as}z_s$.  In particular, $\langle u_i - u_j, z_\ell\rangle=0$.  Since $u_i-u_j=(1-\widetilde{b}_{ij})(z_i - z_j)+(\widetilde{b}_{i\ell}-\widetilde{b}_{j\ell})z_\ell$, and $\langle z_i - z_j, z_\ell\rangle=0$, we get $\widetilde{b}_{i\ell}=\widetilde{b}_{j\ell}$ for all distinct $i,j,\ell$, which finally implies that $\widetilde{B}=B$.  So, $\frac{1}{C(\widetilde{B})}\leq\frac{1}{C(B)}=(1-1/k)\frac{8\pi}{9}$ with equality corresponding to the best constant in \eqref{svdgro}.
%
%
\end{proof}
%
%
%

\section{Three dimensions: Two Critical Partitions}
\label{sec:analytic-three}

We now show that, when $n=3$, the critical point equations \eqref{fd:winning-cones} have exactly two nondegenerate solutions (up to rotations of the sets, measure zero changes, and relabeling).

\subsection{Classification of Critical Points}

\begin{lemma}\label{an:classification}
Let $A_1,\ldots,A_4\subset\R^3$ satisfy the critical point condition \eqref{fd:winning-cones}.  Assume also that $\mathrm{span}(z_1,z_2,z_3,z_4)$ has dimension $3$, and $\gamma_3(A_i)>0$ $\forall$ $1\leq i\leq 4$.  Then the partition $A_1\cap S^2,\ldots,A_4\cap S^2$ consists of a regular spherical triangle with edge length $\theta$ together with three congruent isosceles triangles.  Moreover $\theta$ can take only two possible values.
\end{lemma}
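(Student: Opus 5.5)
The plan is to turn the critical point condition \eqref{fd:winning-cones} into a ``rubber band'' equilibrium, as in \cite[Lemma 3.14, Corollary 3.15]{HJN}. That equilibrium will force a rigid metric relation between the vectors $z_i$ and the spherical partition, and the classification then reduces to a finite system for a tetrahedron.

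\emph{Step 1 (combinatorics).} Summing the moments over the partition gives $\sum_i z_i=\int_{\R^3}x\,d\gamma_3=0$. So the origin is the centroid of the tetrahedron $T=\mathrm{conv}(z_1,\dots,z_4)$, which is nondegenerate because the $z_i$ span $\R^3$. By \eqref{fd:winning-cones}, $A_i$ is the normal cone of $T$ at the vertex $z_i$, so $(A_1,\dots,A_4)$ is the normal fan of $T$. Every pair of vertices of a tetrahedron is joined by an edge, so every pair $A_i,A_j$ meets along a $2$-dimensional face lying in the plane $\langle z_i-z_j,x\rangle=0$. Hence each $P_i\colonequals A_i\cap S^2$ is a spherical triangle, and the four triangles $P_i$ tile $S^2$ with tetrahedral combinatorics. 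Let $\theta_{ij}$ be the length of the common edge of $P_i$ and $P_j$. By polarity, $\theta_{ij}$ equals $\pi$ minus the dihedral angle of $T$ at the edge $z_iz_j$. The angle of $P_i$ at the vertex opposite its side $\theta_{jk}$ equals $\pi$ minus the face angle $\angle z_jz_iz_k$ of $T$.

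\emph{Step 2 (equilibrium and the key identity).} Since each $A_i$ is a cone, polar coordinates give $z_i=c\int_{P_i}u\,d\sigma(u)$ with $c=\int_0^\infty r^3e^{-r^2/2}dr/(2\pi)^{3/2}$. Apply the divergence theorem on the sphere: $\int_P u\,d\sigma=\frac12\sum_{e}\ell_e\nu_e$, where $\nu_e$ is the unit normal of the great circle through the edge $e$, pointing into $P$. A hemisphere checks the normalization. This yields
\[
z_i=\sum_{j\neq i}w_{ij}(z_i-z_j),\qquad w_{ij}\colonequals\frac{c\,\theta_{ij}}{2\|z_i-z_j\|}>0.
\]
In matrix form this says $(I-L)Z=0$, where $L$ is the weighted graph Laplacian of $K_4$ and $Z$ is the $4\times3$ matrix with rows $z_i$. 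The columns of $Z$ span a $3$-dimensional subspace of $\mathbf 1^\perp$, so they span all of $\mathbf 1^\perp$. Therefore $L=I$ on $\mathbf 1^\perp$ and $L\mathbf 1=0$, i.e. $L=I-\frac14\mathbf 1\mathbf 1^T$, which forces $w_{ij}=\frac14$ for every pair. Consequently
\[
\theta_{ij}=\kappa\,d_{ij},\qquad d_{ij}\colonequals\|z_i-z_j\|,\quad \kappa\colonequals\frac{1}{2c}.
\]
In words, each exterior dihedral angle of $T$ is a fixed constant times the corresponding edge length.

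\emph{Step 3 (reduction to a finite system).} Take the six lengths $d_{ij}$ as unknowns; they determine $T$ up to congruence. For each vertex $i$ and each pair $j,k\neq i$, write the spherical law of cosines in $P_i$, whose sides are $\kappa d_{ij},\kappa d_{ik},\kappa d_{jk}$. Equate the resulting angle with $\pi-\angle z_jz_iz_k$, computed from the Euclidean law of cosines in the $d$'s. This gives a closed system; the scale is pinned because $\kappa$ is an absolute constant.

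\emph{Step 4 (forcing symmetry) — the main obstacle.} I then want to show that every solution has a vertex, say $z_4$, with $d_{14}=d_{24}=d_{34}$ and $d_{12}=d_{13}=d_{23}$, i.e. $C_{3v}$ symmetry. That symmetry makes $P_4$ a regular spherical triangle, with side $\theta=\kappa d_{12}$ and the same $\theta$ for $P_4$'s sides since they are $\theta_{4j}$... more precisely, the sides of $P_4$ are $\theta_{14},\theta_{24},\theta_{34}$, all equal, and $P_1,P_2,P_3$ are congruent isosceles triangles. My first attempt will follow \cite[Theorem 3.2]{laurent}. Compare the equations at two vertices sharing an edge, and use the strict monotonicity of the spherical cosine law in the side lengths against the opposite monotonicity of Euclidean face angles. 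The aim is to show that if, say, $d_{ik}>d_{jk}$, then some angle inequality propagates around the tetrahedron and contradicts the Gauss--Bonnet identities $\mathrm{area}(P_i)=\sum(\text{angles})-\pi$ with $\sum_i\mathrm{area}(P_i)=4\pi$. From this I expect first that opposite edges pair up, and then that three edges at one vertex coincide.

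\emph{Step 5 (the one-parameter equation).} In the symmetric family, $T$ is determined by the base edge $a$ and the lateral edge $b$. The remaining equations from Step 3 reduce to a single transcendental equation in one ratio once the scale is fixed by $\kappa$. I will show it has exactly two roots by an explicit convexity or sign-change analysis; this gives the two possible values of $\theta$.
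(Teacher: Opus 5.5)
\textbf{Verdict: the proposal has a genuine gap.} Steps 1--3 are correct, but the substance of the lemma is Step 4, and there you only describe what you hope to show.

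In Steps 1--3 there are index slips: the sides of $P_i$ are $\kappa d_{ij},\kappa d_{ik},\kappa d_{il}$, not $\kappa d_{jk}$, and the regular triangle $P_4$ has side $\kappa d_{14}$. Step 2 is a nice self-contained replacement for citing \cite{HJN}. Forcing $L=I-\frac14\mathbf 1\mathbf 1^T$ from $\operatorname{rank}Z=3$ is the same rank argument as the paper's $\operatorname{rank}B\le 1$.

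The mechanism you sketch for Step 4 has two problems.
\begin{itemize}
\item The Gauss--Bonnet identities hold for the normal fan of every tetrahedron, critical or not. They can only give a contradiction once you isolate a specific inequality forced by $\theta_{ij}=\kappa d_{ij}$, and you have not done that.
\item If ``opposite edges pair up'' means opposite edges are equal, that intermediate claim is false. The second critical point ($\theta\approx 1.538\neq\cos^{-1}(-1/3)$) is a non-regular $C_{3v}$ tetrahedron. Each of its pairs of opposite edges consists of a base edge of length $a$ and a lateral edge of length $b\neq a$. Combined with your next expected step (three edges at one vertex coincide), that claim would prove only the regular tetrahedron is critical, which is wrong.
\end{itemize}

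The missing idea is to trade your six unknowns $d_{ij}$ for four.
\begin{itemize}
\item \emph{Factorization.} Let $\phi_{ij}$ be the dihedral angle at $z_iz_j$, $V$ the volume, and $S_k,S_l$ the areas of the two faces through that edge, where $\{k,l\}=\{1,2,3,4\}\setminus\{i,j\}$. The tetrahedral law of sines gives $\sin\phi_{ij}=3Vd_{ij}/(2S_kS_l)$. So $\pi-\phi_{ij}=\kappa d_{ij}$ becomes $\psi(\pi-\phi_{ij})=x_kx_l$, with $\psi(\theta)=\theta/\sin\theta$ and $x_m\colonequals(2\kappa/(3V))^{1/2}S_m$. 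This is exactly the paper's factorization $B=xx^T$.
\item \emph{Four scalar equations.} Take Minkowski's relation $\sum_m S_m n_m=0$, where the $n_m$ are the outward facet normals, i.e.\ the vertices of the spherical partition. Dotting it with $n_m$ gives $x_m^2=-\sum_{l\neq m}\eta(x_mx_l)$, with $\eta(s)=s\cos\psi^{-1}(s)$.
\item \emph{The decisive fact $\eta'<-1$.} This holds because $\eta(s)+s=\theta\cot(\theta/2)$ is decreasing in $\theta$. Subtract the equations for $i$ and $j$ and apply the mean value theorem. You get $x_i+x_j>\sum_{k\neq i,j}x_k$ whenever $x_i\neq x_j$.
\item \emph{Symmetry.} Under the ordering $x_1\le x_2\le x_3\le x_4$, this inequality forces $x_1=x_2=x_3$, which is the $C_{3v}$ symmetry.
\end{itemize}

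Step 5 is also only promised. A sign-change analysis can exhibit roots but never bounds their number. You need something like the paper's strict convexity of $\delta\mapsto\sqrt{1-\delta^2}\,F(-\delta)$, for the function in \eqref{eq:Fbeta}, to conclude that $\theta$ takes at most two values.
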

\begin{proof}
Let $v_1,\dots,v_4\in S^2$ be the vertices of the partition ($\{v_i\}=S^2\cap\bigcap_{j\neq i}A_{i}$).  For each $1\leq i,j\leq 4$, denote $c_{ij}\colonequals\langle v_i,v_j\rangle$.  For any $c\in(-1,1)$, denote $f(c)\colonequals\arccos(c)/\sqrt{1-c^2}$.  By \cite[Lemma 3.14, Corollary 3.15]{HJN}
\begin{equation}\label{expeq}
0
=\sum_{j\in\{1,2,3,4\}\setminus\{i\}}f(c_{ij})(v_j-c_{ij}v_i),\qquad\forall\,1\leq i\leq 4.
\end{equation}
%
%


%

Let $V\in\R^{3\times 4}$ have columns $(v_i)_{i=1}^4$, and let $B\in\R^{4\times 4}$ be symmetric with
$B_{ij}\colonequals f(c_{ij})$ for $i\ne j$ and
$B_{ii}\colonequals-\sum_{j\ne i}f(c_{ij})c_{ij}$. Then $VB=0$ using $c_{ij}=c_{ji}$.

Since $v_1,\ldots,v_4$ span $\R^3$ by assumption (otherwise the span of $z_1,\ldots,z_4$ would not have dimension $3$), we have $\operatorname{rank}(V)=3$, so $\dim\ker(B)\ge 3$. Hence $\operatorname{rank}(B)\le 1$. Since $B\neq 0$, in fact $\operatorname{rank}(B)=1$. Therefore $\exists$ $x\in\R^4$ with all nonzero entries such that $B=\pm xx^{T}$.  In fact, $B=xx^T$, since $B_{12}B_{13}B_{23}=\pm x_{1}^2x_{2}^2x_{3}^2>0$, so $B\neq - xx^T$. (Since $B_{ij}>0$ $\forall$ $i\neq j$, we may assume $x_{i}>0$ for all $1\leq i\leq 4$.)  Let
\begin{equation}\label{psidef}
 \psi(\theta)\colonequals\frac{\theta}{\sin\theta}\qquad\forall\,0<\theta<\pi,
\end{equation}
so $\psi\colon(0,\pi)\to(1,\infty)$ is a real-analytic bijection (since $\psi'(\theta)=(\sin\theta - \theta\cos\theta)/\sin^{2}\theta$ so $\psi'(\theta)>0$ $\forall$ $0<\theta<\pi/2$ since $\tan\theta>\theta$ there, by the power series expansion of $\tan\theta$, so $\psi'>0$ extends to all $0<\theta<\pi$ since $(d/d\theta)(\sin\theta-\theta\cos\theta)=\theta\sin\theta>0$ $\forall$ $0<\theta<\pi$). Let
\begin{equation}\label{rhodef}
\rho(s)\colonequals\cos\bigl(\psi^{-1}(s)\bigr),
 \qquad
 \eta(s)\colonequals s\rho(s),\qquad\forall\,s>1.
\end{equation}

Write $s=\psi(\theta)$ with $\theta\in(0,\pi)$. Then by \eqref{rhodef} and \eqref{psidef},
\begin{equation}\label{etaeq}
 \eta(s)=\theta\cot\theta,\qquad\eta(s)+s=\theta\cot(\theta/2)
\end{equation}
%

Then $(d/d\theta)(\eta(s)+s)=(\sin(\theta)-\theta)/(2\sin^2(\theta/2))<0$.  Since $\psi'>0$, the chain rule implies $(d/ds)(\eta(s)+s)<0$, so that $\eta'(s)<-1$.  Let $\theta=\cos^{-1}(c)$.  Then $f(c)=\theta/\sin\theta$, and $\eta(f(c))=f(c)\rho(f(c))=f(c)\cos\theta=cf(c)$.  The diagonal entries of $B=xx^T$ say $B_{ii}=x_i^2$ and $f(c_{ij})=B_{ij}=x_ix_j$ for all $1\leq i<j\leq 4$, and the definition of $B_{ii}$ gives
\begin{equation}\label{xeqs}
x_{i}^{2}=-\sum_{j\neq i}f(c_{ij})c_{ij}
=-\sum_{j\neq i}\eta(f(c_{ij}))
=-\sum_{j\neq i}\eta(x_ix_j),\qquad\forall\,1\leq i\leq 4.
\end{equation}
%
%
%
%

Subtracting these equations for two different values of $i$ gives
\[
x_i^2 - x_j^2 = -\sum_{k\neq i,j}( \eta(x_i x_k) - \eta(x_jx_k)).
\]
The mean value theorem says $\eta(x_i x_k) - \eta(x_jx_k)=(x_i-x_j)x_k \eta'(\xi)$ for some $\xi$.  Since $\eta'<-1$,
\begin{equation}\label{aineq}
x_i+x_j > \sum_{k\neq i,j}x_k,\qquad\text{if}\,x_{i}\neq x_j.
\end{equation}

After reordering, assume $x_{1}\leq x_2\leq x_3\leq x_4$.  Then \eqref{aineq} rules out $x_1<x_3$ since $x_1+x_3\leq x_2+x_4$ but \eqref{aineq} would imply $x_1+x_3>x_2+x_4$.  Therefore,
\begin{equation}\label{xeq}
x_1=x_2=x_3\leq x_4.
\end{equation}

Let  $\theta_{ij}\colonequals\cos^{-1}(\langle v_i,v_j\rangle)$ for all $1\leq i<j\leq4$ by definition of $f$ and $B$.  Then $\theta_{ij}/\sin\theta_{ij}=f(c_{ij})=x_ix_j$.  Since $\psi$ is increasing, we get $\theta_{12}=\theta_{23}=\theta_{13}$ and $\theta_{14}=\theta_{24}=\theta_{34}$ by \eqref{xeq}.  That is, the partition $A_1\cap S^2,\ldots,A_4\cap S^2$ consists of a regular spherical triangle with edge length $\theta_{12}$ together with three congruent isosceles triangles with edge lengths $\theta_{12},\theta_{14},\theta_{14}$.

It remains to show that $\theta_{12}$ can take at most two values.

We could simply invoke \cite[page 9]{HJN}, but we instead include the argument here.  Let
\begin{equation}\label{abdef}
 \alpha\colonequals\rho(x_1^2),
 \qquad
 \beta\colonequals\rho(x_1x_4).
\end{equation}
Using \eqref{rhodef}, \eqref{xeqs} becomes
%
\begin{equation}
 x_1^2+2x_1^2\alpha+x_1x_4\beta=0,
 \qquad
 x_4^2+3x_1x_4\beta=0.
\end{equation}
Therefore
\begin{equation}\label{eq:alpha-beta}
1+3\frac{x_1}{x_4}\beta=0
,\quad
1+2\alpha+\frac{x_4}{x_1}\beta=0
=1+2\alpha - 3\beta^2,
\qquad\alpha=(3\beta^2 -1)/ 2.
\end{equation}
%
By \eqref{rhodef} and \eqref{abdef}, $x_1^2=\psi(\cos^{-1}\alpha)$ and $x_1x_4=\psi(\cos^{-1}\beta)$.  So, the relation $x_1x_4-(x_4/x_1)x_1^2=0$ gives the single equation
\begin{equation}\label{eq:Fbeta}
F(\beta)\colonequals\psi(\cos^{-1}\beta)+3\beta\,\psi\!\left(\cos^{-1}\frac{3\beta^2-1}{2}\right)=0,
 \qquad \beta\in(-1,0).
\end{equation}
%
%

The function $\delta\mapsto \sqrt{1-\delta^2}F(-\delta)$ is strictly convex on $\delta\in(0,1)$, since its second derivative is $
\frac{\delta(37-18\delta^2+45\delta^4)}
     {(1-\delta^2)^{3/2}(1+3\delta^2)^2}
+\frac{18\sqrt3\,\delta \cos^{-1}((3\delta^2 - 1)/2)}{(1+3\delta^2)^{5/2}}>0$, so it has at most two zeros.  In fact, exactly two zeros can be found numerically.  One is $1/3$, the other is larger than $1/2$.  ($\delta=1/3$ corresponds to $\theta_{12}=\cos^{-1}(-1/3)\approx 1.9106332362490187$, and the other corresponds to $\theta_{12}\approx 1.5379684120790425$ \cite{HJN}.)

\end{proof}

\section{Proof of the Main Theorem}\label{secmain}

\begin{proof}[Proof of Theorem \ref{intro:main} and Remark \ref{uniquerk}]
Fix $n,N\geq1$.  Let $A_1,\ldots,A_k$ be a partition of $\R^n$ with $k\leq N$ that maximizes $F(\A)$ on $\R^n$ with $\gamma_{n}(A_i)>0$ for all $1\leq i\leq k$.  From Lemma \ref{fd:attainment}, we may assume $k=n+1$ and $F(A_1,\ldots,A_k)=C_k$.  We use the notation of \eqref{mdef}.  If $k=2$, \cite[Corollary 3.4]{KN} implies $\sum_{i=1}^{2}\|z_i\|^2\leq\frac{1}{\pi}<9/(8\pi)$.  If $k=3$, \cite[Corollary 3.4]{KN} implies via a Lagrange multiplier argument that $\sum_{i=1}^{3}\|z_i\|^2\leq9/(8\pi)$ with equality only for three disjoint 120 degree sectors centered at the origin.  If $k=4$, Lemma \ref{an:classification} implies that there are only two different partitions that could maximize $F(\A)$ (up to symmetries), and an explicit calculation \cite[Lemma 4.1]{HJN} shows they satisfy $\sum_{i=1}^{4}\|z_i\|^2<9/(8\pi)$.  It remains to show: it cannot occur that there is a partition that maximizes $F(\A)$ with more than four sets of positive measure.  This is shown in Lemma \ref{dimlem}, concluding the proof.

\end{proof}

\section{Measure and Moment Constraints}\label{sc:section}

For any $t\in\R$ and $1\leq i\leq k$, define
\begin{equation}\label{rdef}
\phi(t)\colonequals e^{-t^{2}/2}/\sqrt{2\pi},\quad
 \Phi(t)\colonequals\int_{-\infty}^t \phi(u)du,\quad
 r_i\colonequals\frac{\|z_i\|},{\sqrt{C_k}}\qquad
 a=a_k\colonequals
\begin{cases}0.929,&k=5,\\0.884,&k\ge6.\end{cases}
\end{equation}
\begin{lemma}[Measure and Moment Constraints]\label{conlem1}
Let $k\geq5$.  Assume $\A=(A_1,\ldots,A_k)\subset\R^{k-1}$ achieves the supremum in the definition of $C_k$ with $\gamma_{k-1}(A_i)>0$ $\forall$ $1\leq i\leq k$.   Then, $\forall$ $1\leq i\leq k$, we have
\begin{equation}\label{sc:three-bounds}
 r_i\le\frac23,\qquad p_i\ge ar_i^2.
\end{equation}
\end{lemma}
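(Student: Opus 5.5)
The plan is to combine the Gaussian ``halfspace maximizes the moment'' bound with the lower bound \eqref{cst} on $C_k$.

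\textbf{Step 1 (one-set moment bound).} For any measurable $E\subset\R^{k-1}$ with $\gamma_{k-1}(E)=p$, and any unit vector $u$,
\[
\langle u,\textstyle\int_E x\,d\gamma_{k-1}(x)\rangle=\int_E\langle u,x\rangle\,d\gamma_{k-1}(x)\le \int_{\{\langle u,x\rangle\ge t\}}\langle u,x\rangle\,d\gamma_{k-1}(x)=\phi(t),\qquad t\colonequals\Phi^{-1}(1-p).
\]
This is the bathtub principle: among sets of measure $p$, the integral of the linear function $\langle u,x\rangle$ is largest on the superlevel set of that measure. Taking $u=z_i/\|z_i\|$ gives
\[
\|z_i\|\le I(p_i)\colonequals\phi(\Phi^{-1}(1-p_i)),
\]
and $I$ is increasing on $(0,1/2]$.

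\textbf{Step 2 ($r_i\le 2/3$).} By Remark \ref{12rk}, $p_i\le1/2$. Hence $\|z_i\|^2\le I(1/2)^2=\phi(0)^2=1/(2\pi)$. Combined with $C_k\ge 9/(8\pi)$ from \eqref{cst}, this gives
\[
r_i^2\le \frac{1/(2\pi)}{9/(8\pi)}=\frac49 .
\]

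\textbf{Step 3 ($p_i\ge a r_i^2$).} It suffices to show $I(p)^2/p\le 9/(8\pi a)$ for all $p\in(0,1/2]$. Then
\[
r_i^2\le \frac{I(p_i)^2}{C_k}\le \frac{8\pi}{9}\cdot\frac{9\,p_i}{8\pi a}=\frac{p_i}{a}.
\]

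\emph{Locating the maximum.} Set $g(p)\colonequals I(p)^2/p$. Since $I'(p)=\Phi^{-1}(1-p)=t$, we get $g'(p)=0$ exactly when $2tp=I$, that is, when $2t(1-\Phi(t))=\phi(t)$. The boundary values are harmless: $g\to0$ as $p\to0$, and $g(1/2)=1/\pi$. The interior critical point is near $t\approx0.6$, where $g\approx0.333^2/0.274\approx0.405$. The requirement $g\le 9/(8\pi a)$ then holds with $a\approx 9/(8\pi\cdot0.405)\approx0.884$.

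\emph{Making it rigorous.} I would bound $g$ on a few subintervals of $t\in[0,\infty)$, using monotonicity of $\phi$ and $1-\Phi$ together with standard Mills-ratio inequalities. This settles $k\ge6$.

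\textbf{Step 4 (the case $k=5$, $a=0.929$).} This is the main obstacle: the pure halfspace bound only gives $0.884$. I would first try to improve Step 1 using the structure available when $k=5$. By \eqref{fd:winning-cones}, $A_i$ is a polyhedral cone in $\R^4$ cut out by the four halfspaces $\langle z_i-z_j,x\rangle\ge0$, $j\neq i$, and by \eqref{fd:obtuse-identities} the $z_j$ are pairwise obtuse. The plan is to show that such a cone of measure $p$ near the critical value $p^*\approx0.27$ has moment strictly below $I(p)$ by a definite factor, since it cannot be a halfspace.

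Alternatively, I would use the optimality condition from Lemma \ref{knlem}, $\sqrt{C_5}\,\sum_j\|z_j\|=\dots$, together with $\sum_j p_j=1$. The aim is to rule out all the $p_j$ sitting near $p^*$ simultaneously, and then re-run the one-variable optimization of $g$ on the reduced range of $p$. If neither route yields $0.929$, I would fall back on a sharper numerical estimate of the cone moment in $\R^4$.
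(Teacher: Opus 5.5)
Your Steps 1--3 are the paper's argument for $r_i\le 2/3$ and for $k\ge 6$. Halfspace rearrangement gives $\|z_i\|\le\phi(t_i)$ with $t_i=\Phi^{-1}(1-p_i)$. Together with $C_k\ge 9/(8\pi)$ this becomes $p_i/r_i^2\ge\frac94\, e^{t_i^2}\Phi(-t_i)$, which is $9/(8\pi g(p_i))$ in your notation. The paper locates the unique minimizer by log-convexity: $(\log f)''=2-M'\ge 1$, using the variance of a truncated Gaussian. The true minimum $0.88438\ldots$ exceeds $0.884$ by only about $4\cdot 10^{-4}$. Crude monotonicity bounds on ``a few subintervals'' will therefore not certify it; you need a convexity argument of this kind or a fine subdivision.

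The genuine gap is Step 4. For $k=5$ you list directions but give no argument producing $a=0.929$, and the halfspace bound does not prove the statement there. Saying the cell ``cannot be a halfspace'' gives no constant. The quantitative tool needs only that $A_i$ is a cone by \eqref{fd:winning-cones}; the polyhedral shape and obtuse angles play no role.
\begin{itemize}
\item In $\R^4$ write $G=RU$, with $R=\|G\|$ independent of $U$ uniform on $S^3$ (law $\sigma$).
\item For a cone, $p_i=\sigma(A_i\cap S^3)$ and $\|z_i\|=\E R\int_{A_i\cap S^3}\langle e_i,\theta\rangle\,d\sigma(\theta)$, where $\E R=3\sqrt{\pi/8}$.
\item Apply the bathtub principle on the sphere rather than in Gaussian space. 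The cap $S_\alpha=\{\theta\in S^3\colon\langle e_i,\theta\rangle\ge\alpha\}$ with $\sigma(S_\alpha)=p_i$ maximizes the spherical integral.
\item Since $\langle e_i,U\rangle$ has density $(2/\pi)\sqrt{1-u^2}$ on $[-1,1]$, you get $p_i=\frac12-\frac1\pi(\arcsin\alpha+\alpha\sqrt{1-\alpha^2})$ with $0\le\alpha<1$ (because $p_i\le 1/2$), and $\|z_i\|\le(1-\alpha^2)^{3/2}/\sqrt{2\pi}$.
\item Hence $p_i/r_i^2\ge\frac94\,p_i/(1-\alpha^2)^3$. This is a function of $\alpha$ alone, with minimum about $0.931$ near $\alpha\approx 0.3$, which exceeds $0.929$.
\end{itemize}
The gain over $0.884$ comes from comparing with the circular cone, which is extremal among cones of given Gaussian measure. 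Its moment is strictly smaller than that of a halfspace of the same measure. As the dimension grows, the cap bound tends to the halfspace bound. This is why the refinement gives no improvement uniform in $k$, and why $a$ depends on $k$.

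Your fallback via Lemma \ref{knlem} and $\sum_j p_j=1$ aims at the wrong kind of conclusion. \eqref{sc:three-bounds} is a bound for each individual cell. Excluding that all $p_j$ sit near $p^*$ simultaneously would not stop a single cell from violating $p_i\ge 0.929\,r_i^2$.
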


\begin{proof}

The argument proceeds by rearrangement.
%
Let \(e_i\colonequals z_i/\|z_i\|\), \(t_i\colonequals\Phi^{-1}(1-p_i)\).  Let $H_i$ be the half space
\(H_i\colonequals\{y\in\R^{k-1}\colon\langle e_i,y\rangle\ge t_i\}\).  Since $\Phi^{-1}(1-p_i)=-\Phi^{-1}(p_i)$, we have $\gamma_{k-1}(H_i)=\Phi(-t_i)=p_i$.  That is, $H_i$ is a half space with the same measure as $A_i$.  Rearrangement in the direction $e_i$ then implies that
\[
\|z_i\|
=\int_{A_i}\langle x,e_i\rangle d\gamma_{k-1}(x)
\leq\int_{H_i}\langle x,e_i\rangle d\gamma_{k-1}(x)
=\int_{t_i}^{\infty} ud\gamma_1(u)
=\phi(t_i).
\]
\begin{equation}\label{sc:halfspace-ratio}
r_i\leq 2/3
,\qquad
 \frac{p_i}{r_i^2}
 \stackrel{\eqref{rdef}}{=}\frac{p_i C_k}{\|z_i\|^2}
 \geq\frac{p_i C_k}{(\phi(t_i))^2}
\stackrel{\eqref{mdef}\wedge\eqref{cst}}{\geq}
 \frac94 e^{t_i^2}\Phi(-t_i)
\end{equation}
The $2/3$ bound follows by $r_i=\|z_i\|/\sqrt{C_k}\leq\phi(t_i)/\sqrt{C_k}\leq\phi(0)/\sqrt{C_k}\leq 2/3$ by \eqref{cst}.

We now minimize the right side of \eqref{sc:halfspace-ratio}.  Put \(M(t)\colonequals\phi(t)/\Phi(-t)\) for any $t\in\R$.
Define \(f(t)\colonequals(9/4)e^{t^2}\Phi(-t)\).  Then $(\log f(t))'=2t-M(t)$ and $(\log f(t))''=2-M'(t)$, while
\[
M'(t) = \frac{\Phi(-t)\cdot(-t)\phi(t) + [\phi(t)]^2}{[\Phi(-t)]^2}
=[M(t)]^2 - tM(t).
\]
\begin{equation}\label{meq}
0
\leq\frac{1}{\Phi(-t)}\int_{t}^{\infty}(u-M(t))^2\phi(u)du
=tM(t)+1-[M(t)]^2
=1-M'(t).
\end{equation}
%
%
Consequently, $(\log f(t))''\geq1>0$, i.e. \((\log f)'\) is strictly
increasing. Its signs at \(0.6119\) and \(0.6121\) are opposite, so
\(f\) has a unique global minimum in that interval.
The elementary numerical bounds in Section \ref{sec:numA} below give
\begin{equation}\label{sc:halfspace-minimum}
 \min_{t\in\R}f(t)>0.8843837>0.884.
\end{equation}
This proves the last inequality of \eqref{sc:three-bounds} when \(k\ge6\).

To prove the remaining case $k=5$ of \eqref{sc:three-bounds}, we additionally use that $A_i$ is a cone by \eqref{fd:winning-cones}.  By Lemma \ref{fd:attainment}, $A_i\subset\R^4$.
Let $G\in\R^4$ be a standard Gaussian.  Then $G=RU$ where \(U\) is uniform on \(S^3\) and independent of \(R\).
The coordinate \(\langle e_i,U\rangle\) has density
\((2/\pi)\sqrt{1-u^2}\) on \([-1,1]\), while $\E R=3\sqrt{\pi/8}$.  Let $\sigma$ denote the Haar probability measure on $S^3$.

The spherical cap $S_\alpha\colonequals\{\theta\in S^{3}\colon \langle e_i,\theta\rangle\geq \alpha\}$ satisfies $p_i=\sigma(A_i\cap S^{3})=\sigma(S_\alpha)= (1/2)-(1/\pi)(\arcsin \alpha + \alpha\sqrt{1-\alpha^2})$.  Since $0<p_i\leq1/2$ by Remark \ref{12rk}, $0\leq\alpha<1$.  Since $A_i$ is a cone, spherical rearrangement gives
\begin{flalign*}
 \|z_i\|
 &=3\sqrt{\pi/8}\int_{A_i\cap S^{3}}\left\langle e_i,\theta\right\rangle d\sigma(\theta)
\leq3\sqrt{\pi/8}\int_{S_\alpha}\left\langle e_i,\theta\right\rangle d\sigma(\theta)\\
 &=3\sqrt{\frac{\pi}{8}}\frac{2}{\pi}\int_\alpha^1u(1-u^2)^{1/2}\,du
 =\frac{(1-\alpha^2)^{3/2}}{\sqrt{2\pi}}.
\end{flalign*}

\begin{equation}\label{pk5bd}
 \frac{p_i}{r_i^2}
 \stackrel{\eqref{rdef}}{=}\frac{p_i C_k}{\|z_i\|^2}
 \geq\frac{p_i C_k2\pi}{(1-\alpha^2)^3}
\stackrel{\eqref{mdef}\wedge\eqref{cst}}{\geq}\frac94 \frac{p_i}{(1-\alpha^2)^3}
 =\frac{9}{4}\frac{(1/2)-(1/\pi)(\arcsin \alpha + \alpha\sqrt{1-\alpha^2})}{(1-\alpha^2)^3}.
\end{equation}

As in the previous case, we now minimize the right side over $\alpha$.  Elementary numerical bounds in Section \ref{sec:numB} prove a lower bound of $.93$, completing the remaining $k=5$ case of the last inequality of \eqref{sc:three-bounds}.
\end{proof}

\begin{lemma}[More Constraints]\label{conlem2}
Let $k\geq5$.  Assume $\A=(A_1,\ldots,A_k)\subset\R^{k-1}$ achieves the supremum in the definition of $C_k$ with $\gamma_{k-1}(A_i)>0$ $\forall$ $1\leq i\leq k$.  Then
\begin{equation}\label{con2}
p_i\ge0.415r_i+0.15r_i^2,\qquad\forall\,1\leq i\leq k
\end{equation}
\end{lemma}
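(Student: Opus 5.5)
The plan is to combine the half-space rearrangement bound from the proof of Lemma \ref{conlem1} with a second, competitor-based inequality. That competitor comes from the maximality of $\A$ and should tie $\|z_i\|$ \emph{linearly} to $p_i$. Throughout, $\A$ is a maximizer with all $p_j>0$, so by Lemma \ref{fd:attainment} the cells are the winning cones \eqref{fd:winning-cones} and \eqref{fd:obtuse-identities} holds. I take as given that $F(\A)=C_k\ge 9/(8\pi)$ by \eqref{cst}.

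\textbf{Step 1 (competitor).} Delete the cell $A_i$. Redistribute each $x\in A_i$ to the cell $A_j$, $j\neq i$, attaining $\max_{j\neq i}\langle z_j,x\rangle$ (the second-largest score). Let $w_j\colonequals\int_{A_i\cap\{\text{$j$ is second}\}}x\,d\gamma_{k-1}$, so that $\sum_{j\ne i}w_j=z_i$. Maximality gives $\sum_{j\neq i}\|z_j+w_j\|^2\le C_k$, that is,
\[
2\sum_{j\neq i}\langle z_j,w_j\rangle+\sum_{j\neq i}\|w_j\|^2\le\|z_i\|^2 .
\]
Equivalently, by Lemma \ref{knlem} with $v_i=0$ and $v_j=z_j$ otherwise, we get $\E\max_{j\neq i}\langle z_j,G\rangle\le\sqrt{C_k(C_k-\|z_i\|^2)}$. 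Since $\E\max_{j}\langle z_j,G\rangle=C_k$, this yields
\[
\E\Big[1_{A_i}\big(\langle z_i,G\rangle-\max_{j\ne i}\langle z_j,G\rangle\big)\Big]\;\ge\; C_k-\sqrt{C_k(C_k-\|z_i\|^2)}\;\ge\;\tfrac12\|z_i\|^2+\tfrac18\|z_i\|^4/C_k .
\]
This lower-bounds the expected ``margin'' of $A_i$ by a quantity quadratic in $r_i$ plus a quartic correction.

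\textbf{Step 2 (upper bound on the margin in terms of $p_i$).} On $A_i$, the margin is at most $\langle z_i-z_j,G\rangle$ for every $j\neq i$. I would average this over $j$ weighted by $\|z_j\|^2$, or else use $\max_{j\neq i}\langle z_j,G\rangle\ge\langle -z_i,G\rangle/(k-1)$, which follows from $\sum_j z_j=0$. Either way the margin is bounded by $c\,\langle z_i,G\rangle$ on $A_i$, plus a term controlled by $\|z_j\|\le\frac23\sqrt{C_k}$ from Lemma \ref{conlem1}. The resulting quantity is $\E[1_{A_i}\langle z_i,G\rangle]=\|z_i\|^2$ together with terms of the form $\E[1_{A_i}|\langle u,G\rangle|]$. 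By Gaussian rearrangement, as in Lemma \ref{conlem1}, each such term is at most $\|u\|\,\phi(\Phi^{-1}(1-p_i/2))\cdot 2$, i.e.\ of order $\|u\|\,p_i\sqrt{2\log(1/p_i)}$. Feeding this into Step 1 and dividing by $\sqrt{C_k}$ should produce an inequality of the form $p_i\,h(p_i)\ge c_1 r_i+c_2r_i^2$ with explicit $h$.

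\textbf{Step 3 (numerics).} Combine the inequality from Step 2 with $p_i\ge a r_i^2$ and $r_i\le 2/3$ from Lemma \ref{conlem1}. In the regime of small $r_i$ the new linear bound dominates, and near $r_i=2/3$ the bound $p_i\ge a r_i^2$ should already exceed $0.415r_i+0.15r_i^2$ once it is sharpened by the exact half-space relation $\|z_i\|\le\phi(\Phi^{-1}(1-p_i))$. One then checks $0.415r+0.15r^2\le$ the implied lower bound on $p_i$ over $r\in(0,2/3]$. This check uses one-variable monotonicity and finitely many evaluations, in the style of Sections \ref{sec:numA}--\ref{sec:numB}.

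\textbf{Main obstacle.} I expect the main obstacle to be Step 2, namely getting a margin bound that is genuinely linear in $p_i$ with a good enough constant. The crude bound via $-z_i/(k-1)$ loses a factor depending on $k$. The better route is probably to use the inverse function of the half-space relation directly, writing $\|z_i\|\le\phi(\Phi^{-1}(1-p_i))$, so that $r_i\lesssim p_i\sqrt{2\log(1/p_i)}/\sqrt{C_k}$. I would then verify numerically that $p=\Phi(-t)$ dominates $0.415\,\phi(t)/\sqrt{C_k}+0.15\,\phi(t)^2/C_k$ for all relevant $t$, using $C_k\ge 9/(8\pi)$. I would try this purely rearrangement-based route first. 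I would fall back on the competitor inequality of Step 1 only if the constant $0.415$ fails for small $p_i$.
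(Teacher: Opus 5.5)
There is a genuine gap. Neither of your two routes gives a lower bound on $p_i$ that is linear in $r_i$ with a constant near $0.415$.

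\textbf{The rearrangement route.} Your preferred, purely rearrangement-based route fails. The half-space comparison $\|z_i\|\le\phi(\Phi^{-1}(1-p_i))$ is attained by half-spaces. Far-out half-spaces have $p=\Phi(-t)$, $\|z\|=\phi(t)$ and $\Phi(-t)/\phi(t)\to0$, so this comparison can only give a bound quadratic in $r_i$; that is exactly Lemma \ref{conlem1}. Concretely, take $C_k=9/(8\pi)$. Your target inequality $\Phi(-t)\ge0.415\,\phi(t)/\sqrt{C_k}+0.15\,\phi(t)^2/C_k$ already fails at $t=1$ ($0.159$ versus about $0.192$). At $r_i=1/2$ the half-space relation only gives $p_i\ge0.224<0.245$.

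Using that $A_i$ is a cone does not rescue this. Spherical rearrangement gives $\|z_i\|\le p_i\,\E\|G\|$, i.e.\ $p_i\ge r_i\sqrt{C_k}/\E\|G\|$. For $C_k=9/(8\pi)$ in $\R^4$ this is $r_i/\pi\approx0.32\,r_i$, and it is worse for larger $k$. Narrow circular cones show this is asymptotically sharp given only $p_i$, $\|z_i\|$ and conicity.

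\textbf{The margin route.} So maximality has to enter through the margin, but your fallback Step 2 does not bound the margin at the right scale. For fixed $v$ one has $\E[1_{A_i}\langle v,G\rangle]=\langle v,z_i\rangle$. Hence majorants such as $\langle z_i-z_j,G\rangle$ or $\frac{k}{k-1}\langle z_i,G\rangle$ give bounds of size $\|z_i\|^2$, which are compatible with your lower bound $\frac12\|z_i\|^2$ for every value of $p_i$. The terms $\E[1_{A_i}|\langle u,G\rangle|]\lesssim\|u\|\,p_i\sqrt{\log(1/p_i)}$ reintroduce the logarithmic loss. What is needed is $\E\max(X_i-\max_{j\ne i}X_j,0)=O(p_i^2)$ with a sharp constant.

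\textbf{The missing idea.} The paper obtains this bound as follows.
\begin{itemize}
\item Choose $h$ with $\langle h,z_i-z_j\rangle=1$ for all $j\ne i$. By \eqref{fd:winning-cones}, $\P(X_i-\max_{j\ne i}X_j\ge u)=\gamma_{k-1}(A_i+uh)\colonequals p_i(u)$.
\item Differentiating under the Gaussian density gives $p_i'(0)=-\langle h,z_i\rangle=-1/\alpha$ with $\alpha=k/(k-1)$, by \eqref{fd:obtuse-identities}.
\item Since $A_i$ is a convex cone, Ehrhard's inequality makes $u\mapsto\Phi^{-1}(p_i(u))$ concave, hence below its tangent line at $0$: $p_i(u)\le\Phi(q-u/(\alpha\phi(q)))$ with $q=\Phi^{-1}(p_i)$.
\item Integrating in $u$ gives the margin bound $\alpha\phi(q)(q\Phi(q)+\phi(q))=\alpha p_i^2f(q)$ with $0<f<1$.
\end{itemize}
Mere log-concavity of $p_i(u)$ would give only $\alpha p_i^2$. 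That suffices for the leading constant $\sqrt{C_k/2}\approx0.423$ but not for the $0.15r_i^2$ term; it already fails at $r_i=0.1$.

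\textbf{Your Step 1 needs sharpening.} Your Step 1 is the right starting point, but it must be strengthened to pair with this upper bound. The paper applies Lemma \ref{knlem} to $v_j=z_j+z_i/(k-1)$, whose squared norms sum to $C_k-\alpha\|z_i\|^2$. This gives the lower bound $C_k(1-\sqrt{1-\alpha r_i^2})$, so the factor $\alpha$ cancels. With your choice $v_i=0$, the leading constant drops to $\sqrt{C_k(k-1)/(2k)}\approx0.378$ at $k=5$, below $0.415$.

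The proof then concludes by showing that $p\mapsto p^2f(\Phi^{-1}(p))$ is increasing on $(0,1/2]$ and checking one explicit inequality numerically for $r\in(0,2/3]$.
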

\begin{proof}
\textbf{Step 1. Removing One Set, Ehrhard's Inequality.}  
Fix $1\leq i\leq k$, let $G$ be a standard Gaussian in $\R^{k-1}$ and put \(X_j\colonequals\langle z_j,G\rangle\) for all $1\leq j\leq k$, and set \(\alpha\colonequals k/(k-1)\).  For any $j\neq i$, define $v_j\colonequals z_j + z_i/(k-1)$.  Then
\[\sum_{j\neq i}\|v_j\|^2=\frac{\|z_i\|^2}{k-1}
+\sum_{j\neq i}(\|z_j\|^2+2\langle z_j, z_i\rangle/(k-1))\stackrel{\eqref{fd:obtuse-identities}}{=}-\frac{\|z_i\|^2}{k-1}+\sum_{j\neq i}\|z_j\|^2.
\]

  Then Lemma \ref{knlem} says that
\begin{flalign*}
(\E \max_{j\colon j\neq i}X_j)^2
&=(\E \max_{j\colon j\neq i}\langle v_j, G\rangle)^2
\leq C_k\sum_{j\neq i}\|v_j\|^2
=C_k\Big(-\frac{\|z_i\|^2}{k-1}+\sum_{j\neq i}\|z_j\|^2\Big)\\
&\stackrel{\eqref{mdef}}{=}C_k(C_k - \alpha \|z_i\|^2)
\stackrel{\eqref{rdef}}{=}C_k^2(1-\alpha r_i^2).
\end{flalign*}

By assumption 
$C_k
=\E \max_{j=1,\ldots,k}X_j
=\E\max_{j\neq i}X_j+\E(X_i - \max_{j\neq i}X_j)1_{\{X_{i}>\max_{j\neq i}X_j\}}$.  Rearranging this identity yields
\begin{equation}\label{sc:loss-lower}
 \mathbb E\max(X_i-\max_{j\ne i}X_j,\,0)
 \ge C_k\bigl(1-\sqrt{1-\alpha r_i^2}\bigr).
\end{equation}

The vectors $\{z_i-z_j\}_{j\neq i}$ form a basis of $\R^{k-1}$, since $\{z_j\}$ span $\R^{k-1}$ while \eqref{fd:obtuse-identities} implies $z_i = -(1/k)\sum_{j\neq i}(z_j - z_i)$ by Lemma \ref{fd:attainment}, so any $x\in\R^{k-1}$ can be written as a sum of $z_j-z_i$ terms plus a $z_i$ term.  Let then $h\in\R^{k-1}$ satisfy
\(\langle h,z_i-z_j\rangle=1\) for every \(j\ne i\).
Define
%
%
\[
 p_i(u)\colonequals
 \P(X_i\ge X_j+u,\,\,\forall\,j\ne i)
\stackrel{\eqref{fd:winning-cones}}{=}\gamma_{k-1}(A_i+uh),\qquad\forall\,u\geq0.
\]
Note that $k-1=\sum_{j\neq i}\langle h,z_i - z_j\rangle=k\langle h,z_i\rangle$ by \eqref{fd:obtuse-identities}, so \(\langle h,z_i\rangle=(k-1)/k=1/\alpha\).
Differentiating $p_i(u)=\int_{A_i + uh}d\gamma_{k-1}(x)=\int_{A_i}d\gamma_{k-1}(x+uh)
=\int_{A_i}e^{-u\langle h,x\rangle-\|uh\|^2/2}d\gamma_{k-1}(x)$,
\begin{equation}\label{sc:translation-derivative}
 p_i'(0)=-\int_{A_i}\langle h,x\rangle\,d\gamma_{k-1}(x)
\stackrel{\eqref{mdef}}{=}-\langle h,z_i\rangle=-1/\alpha.
\end{equation}

Ehrhard's inequality \cite{Ehrhard} states that \(\Phi^{-1}\circ\gamma_{k-1}\) is
concave under Minkowski interpolation $\lambda A + (1-\lambda)B$ of convex sets $A,B\subset\R^{k-1}$, $0<\lambda<1$.  
The cone \(A_i\) is closed and convex. Since $\{z_i - z_j\}_{j\neq i}$ is a basis, $A_i$ has nonempty interior and is a proper subset of $\R^{k-1}$ with positive Gaussian measure. Ehrhard's inequality implies that
\begin{flalign*}
\Phi^{-1}(p_i(\lambda u + (1-\lambda)v))
&=\Phi^{-1}(\gamma_{k-1}(\lambda(A_i + uh)+(1-\lambda)(A_i +vh)))\\
&\geq \lambda\Phi^{-1}(p_i(u))+(1-\lambda)\Phi^{-1}(p_i(v))\qquad\forall\,u,v\geq0,\,
0<\lambda<1.
\end{flalign*}
That is, 
\(u\mapsto\Phi^{-1}(p_i(u))\) is then concave on $u\geq0$.  Note that $0<\gamma_{k-1}(A_i +uh)<1$, and $p_i'(0)$ exists by \eqref{sc:translation-derivative}. Let
\(q\colonequals\Phi^{-1}(p_i)\).  Concavity implies that $u\mapsto\Phi^{-1}(p_i(u))$ lies below its tangent line at $u=0$.  Since $(d/du)\Phi^{-1}(p_i(u))|_{u=0}=p_i'(0)/\Phi'(\Phi^{-1}p_i(0))=-1/(\alpha\phi(q))$, 
\begin{equation}\label{pineq}
 p_i(u)\le\Phi\!\left(q-\frac{u}{\alpha\phi(q)}\right),
 \qquad \forall\,u\geq0.
\end{equation}
Integration, using \((r\Phi(r)+\phi(r))'=\Phi(r)\) and the chain rule, gives
\begin{equation}\label{sc:loss-upper}
 \mathbb E\max(X_i-\max_{j\ne i}X_j,0)
 =\int_0^\infty p_i(u)\,du
\stackrel{\eqref{pineq}}{\le}\alpha\phi(q)\bigl(q\Phi(q)+\phi(q)\bigr).
\end{equation}
%
Define
\[
f(q)
\colonequals\frac{\phi(q)\bigl(q\Phi(q)+\phi(q)\bigr)}{\Phi(q)^2},\qquad\forall\,q\in\R.
\]
Combining \eqref{sc:loss-lower} and \eqref{sc:loss-upper}, rationalizing,
and canceling \(\alpha\), we obtain (using $\alpha>1$)
\begin{equation}\label{sc:loss-constraint}
 p_i^2f\bigl(\Phi^{-1}(p_i)\bigr)
 \ge\frac{C_kr_i^2}{1+\sqrt{1-\alpha r_i^2}}
 \ge\frac{(9/(8\pi))r_i^2}{1+\sqrt{1-r_i^2}}.
\end{equation}
%

\textbf{Step 2. Monotonicity.}  We show \(f\) is nonincreasing, satisfies \(0<f(q)<1\), and
\[
 p\mapsto p^2f\bigl(\Phi^{-1}(p)\bigr)
\]
is increasing on \((0,1/2]\).

Fix $q\in\R$.  Put \(\lambda\colonequals\phi(q)/\Phi(q)\). Integration of a standard Gaussian $Z$ gives
\[
 \mathbb E(Z\mid Z\le q)=-\lambda,\qquad
 v(q)\colonequals\operatorname{Var}(Z\mid Z\le q)=1-q\lambda-\lambda^2.
\]
The last identity was essentially shown in \eqref{meq}.
 Thus \(f=1-v=\lambda(q+\lambda)\).  The density of $Z$ is strictly positive, so that $v>0$, i.e. $f<1$.  The random variable \(U\colonequals q-Z\) satisfies $0<\E(U\mid Z\leq q)=q+\lambda$, implying that $f>0$.  Thus $0<f<1$.
 
Define
\(s(u)\colonequals\Phi(q-u)/\Phi(q)\), \(u\ge0\).
The identity
\((\log\Phi)''(q)=-\lambda(q+\lambda)<0\)
shows that \(\log s\) is concave. Its increments therefore satisfy
\[
 \log s(b+c)-\log s(c)\le\log s(b)-\log s(0),\qquad\forall\,b,c>0.
\]
Since \(s(0)=1\), we get \(s(b+c)\le s(b)s(c)\).
Since $s(u)=\P(U\geq u\mid Z\leq q)$, substituting $u=b+c$ gives
\begin{flalign*}
 \frac12\mathbb E(U^2\mid Z\leq q)
 &=\int_{0}^{\infty}us(u)du
 =\int_{0}^{\infty}(\int_{0}^{u}db) s(u)du
=\int_{0}^{\infty}\int_{b}^{\infty} s(u)dudb\\
&=\int_{0}^{\infty}\int_{0}^{\infty} s(b+c)dbdc
 \leq\int_0^\infty\!\int_0^\infty s(b)s(c)\,db\,dc
 =(\E(U\mid Z\leq q))^2.
\end{flalign*}
%
Hence \(v(q)\le(2-1)(\E(U\mid Z\leq q))^2=(q+\lambda)^2\). Then $v=(q+\lambda)'$ and $\lambda'=-\lambda(q+\lambda)$ imply
\[
 v'(q)=\lambda\bigl((q+\lambda)^2-v(q)\bigr)\ge0,
\]
so \(f=1-v\) is nonincreasing.
Finally, if $q\colonequals\Phi^{-1}(p)\leq0$, then $p\leq1/2$ and $p^2 f(\Phi^{-1}(p))=\phi(q)(q\Phi(q)+\phi(q))$ by definition of $f$.  When $q\leq0$, both $q\mapsto \phi(q)$ and $q\mapsto q\Phi(q)+\phi(q)$ have nonnegative derivative, completing the proof, since $q\Phi(q)+\phi(q)=\Phi(q)(q+\lambda)>0$.
%
%

\textbf{Step 3.  Conclusion.  Combining Estimates.}  Set
\(p_0(r)\colonequals(0.415+0.15r)r\) for all \(0<r\le2/3\). Then
\(p_0(r)\le p_0(2/3)=103/300<1/2\).
We will prove
\begin{equation}\label{sc:test-probability}
 (0.415+0.15r)^2f\bigl(\Phi^{-1}(p_0(r))\bigr)
 <\frac{9/(8\pi)}{1+\sqrt{1-r^2}},\qquad\forall\,0< r\leq 2/3.
\end{equation}
Assuming for now that \eqref{sc:test-probability} holds, multiply both sides of \eqref{sc:test-probability} by $r^2$ to get $h(p_0(r))\leq (9/(8\pi)) r^2/(1+\sqrt{1-r^2})$, where $h(p)\colonequals p^2 f(\Phi^{-1}(p))$, while \eqref{sc:loss-constraint} says $h(p_i)\geq (9/(8\pi))r_i^2 / (1+\sqrt{1-\alpha r_i^2})$.  Choosing $r=r_i$, we then get $h(p_0(r_i))\leq h(p_i)$.  From Step 2, $h$ is increasing, so $p_0(r_i)\leq p_i$, which concludes the proof (noting $p_i\leq1/2$ by Remark \ref{12rk} and $r_i\leq2/3$ by \eqref{sc:three-bounds}, so $p_0(r_i)\leq1/2$ as well).

We verify \eqref{sc:test-probability} by the Python code in Section \ref{secpyb} below, which splits the interval $[0,2/3]$ into sub-intervals $[a,b]\subset[0,2/3]$.  Then \eqref{sc:test-probability} holds for all $r\in[a,b]\cap(0,2/3]$ when the following quantity is positive:
\[
 \Delta(a,b,U)\colonequals
 \frac{9/(8\pi)}{1+\sqrt{1-a^2}}-U(0.415+0.15b)^2,
 \qquad U\colonequals
 \begin{cases}
 f(\Phi^{-1}(p_0(a))) & \text{, if }a>0\\
 1 & \text{, if }a=0.
 \end{cases}
\]
The implication $\Delta(a,b,U)>0$ $\Rightarrow$ \eqref{sc:test-probability} holds on $[a,b]\cap(0,2/3]$ follows by monotonicity (e.g. that $f$ is decreasing).

\end{proof}

\section{Constraints on Two Cells}
\label{geo:section}

In this section we assume $\A=(A_1,\ldots,A_k)$ is a partition of $\R^{k-1}$ with $F(\A)=C_k$ and $\gamma_{k-1}(A_i)>0$ for all $1\leq i\leq k$.  We now use the
first-moment identity for the cell with smallest centroid norm to obtain
constraints on pairs of the other centroids.
Let $k\geq5$.  Relabel the nonzero centroids so that
\begin{equation}\label{cldef}
 r_1\ge r_2\ge\cdots\ge r_k=t>0,
 \qquad
 L\colonequals r_1^2+r_2^2,
 \qquad c\colonequals r_3^2.
\end{equation}

Let $G\in\R^{k-1}$ be a standard Gaussian vector, and let
\[
 e\colonequals\frac{z_k}{|z_k|},\qquad T\colonequals\langle e,G\rangle,
 \qquad
 \alpha_j\colonequals-\frac{\langle z_k,z_j\rangle}{|z_k|^2}
 \quad\forall\,1\le j<k.
\]
Then \eqref{fd:obtuse-identities} implies
\begin{equation}
 \alpha_j>0,\quad\forall\,1\leq j<k,\qquad \sum_{j<k}\alpha_j=1.
 \label{geo:coefficients}
\end{equation}
Define the residual vectors and scores by
\[
 w_j=\frac{z_j}{\sqrt{C_k}}+\alpha_jte,
 \qquad Y_j=\langle w_j,G\rangle,\qquad\forall\,1\le j<k\]
Each $w_j$ is orthogonal to $e$.  Consequently $Y\colonequals(Y_j)_{j<k}$ is a
centered Gaussian vector independent of the standard normal variable
$T$, and
\begin{equation}
 \frac{\langle z_j,G\rangle}{\sqrt{ C_k}}=-\alpha_jtT+Y_j.
 \label{geo:decomposition}
\end{equation}
%
%
The random vector $Y$ has covariance matrix $H$ where
\begin{equation}
 H_{j\ell}=\frac{\langle z_j,z_{\ell}\rangle}{C_k}-\alpha_j\alpha_{\ell}t^2,
 \qquad h_j\colonequals H_{jj}=r_j^2-\alpha_j^2t^2.
 \label{geo:covariance}
\end{equation}
Since $\sum_{j<k}w_j=0$ by definition of $w_j$ and \eqref{fd:obtuse-identities}, every row of $H$ sums to zero; also $H_{j\ell}<0$
for $j\ne \ell$ by \eqref{fd:obtuse-identities}.  For distinct $j,\ell<k$, set
\begin{equation}\label{ddef}
 D_{j\ell}\colonequals\sqrt{h_jh_\ell-H_{j\ell}^2}.
\end{equation}
The quantity inside the root is nonnegative since it is the determinant of a $2\times 2$ minor of a covariance matrix.  Moreover, these quantities are positive.  Indeed, a linear dependence between
$w_j$ and $w_\ell$ would give a dependence between $z_j,z_\ell,z_k$, whereas
every proper subset of the centroids is linearly independent and
$k\ge5$.  In particular, all $h_j$ are positive.

\begin{lemma}[Two competing scores]
\label{geo:pair}
For every pair of distinct indices $j,\ell<k$,
\begin{equation}
 D_{j\ell}\le \frac3\pi\,t(1+\alpha_j)(1+\alpha_\ell).
 \label{geo:pair-bound}
\end{equation}
\end{lemma}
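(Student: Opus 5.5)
The plan is to work entirely with the cell $A_k$ of smallest centroid norm, written in the coordinates $(T,Y)$ of \eqref{geo:decomposition}. By \eqref{fd:winning-cones}, $x\in A_k$ exactly when $\langle z_k,x\rangle\ge\langle z_m,x\rangle$ for all $m<k$. Dividing by $\sqrt{C_k}$ and using $\langle z_k,G\rangle/\sqrt{C_k}=tT$ together with \eqref{geo:decomposition}, this becomes
\[
A_k=\{\,Y_m\le t(1+\alpha_m)T\ \ \forall\,m<k\,\}.
\]
Set $U_m\colonequals Y_m/(t(1+\alpha_m))$, which is legitimate since $1+\alpha_m>0$ by \eqref{geo:coefficients}. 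Then $A_k=\{T\ge\max_{m<k}U_m\}$, and $T$ is independent of the centered Gaussian vector $(U_m)_{m<k}$. The first-moment identity for $A_k$ splits into two parts, because $z_k$ is parallel to $e$. In the $e$ direction it gives $\E[T1_{A_k}]=\|z_k\|=t\sqrt{C_k}$. Orthogonal to $e$, it gives $\E[Y_m1_{A_k}]=0$ for every $m<k$, since each $w_m\perp e$.

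First I will reduce to the pair $(j,\ell)$. Conditioning on $Y$ and integrating out $T$ gives $\E[T1_{A_k}]=\E\,\phi(\max_m U_m)$. To involve only two scores, I instead use the inclusion $A_k\subset\{T\ge\max(U_j,U_\ell)\}$, applied to the nonnegative integrand $T_+$:
\[
t\sqrt{C_k}\le\E\big[T_+\,\P(U_j\le T,\,U_\ell\le T\mid T)\big].
\]
I will then express this bivariate orthant probability through the covariance of $(U_j,U_\ell)$. Its determinant is $D_{j\ell}^2/(t^4(1+\alpha_j)^2(1+\alpha_\ell)^2)$, so the normalized quantity
\[
\widetilde D\colonequals\frac{D_{j\ell}}{t^2(1+\alpha_j)(1+\alpha_\ell)}
\]
is exactly what \eqref{geo:pair-bound} controls: the claim is equivalent to $\widetilde D\le 3/(\pi t)$. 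Note that $1/(2\pi\widetilde D)$ is the density of $(U_j,U_\ell)$ at the origin.

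The inequality that should close the argument compares $t\sqrt{C_k}$ with a functional of the pair that decreases as $\widetilde D$ grows. A large determinant with variances held fixed forces $H_{j\ell}$ toward $0$, and the orthant probability $\tfrac14+\arcsin(\rho)/(2\pi)$ then stays small. Combining this with $C_k\ge9/(8\pi)$ from \eqref{cst}, and with $\sqrt{9/(8\pi)}=3/\sqrt{8\pi}$, should produce the constant $3/\pi$. The orthogonality relations $\E[Y_m1_{A_k}]=0$ are available to eliminate the variance terms $h_j,h_\ell$ that otherwise appear alongside $D_{j\ell}$. Concretely, integrating by parts in the Gaussian $Y$ variables, via Stein's identity on the boundary faces $\{U_m=T\}$, turns these relations into identities among surface integrals over the faces of $A_k$. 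The face $\{U_j=U_\ell=T\}$ carries the density $1/(2\pi\widetilde D)$.

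The main obstacle is the direction of the inequality: bounding $\E[T1_{A_k}]$ from above by two-score quantities naively yields upper bounds on $t$, whereas \eqref{geo:pair-bound} needs $t$ bounded below by a multiple of $\widetilde D^{-1}$. I will try first to obtain the lower bound from the Stein/boundary identity. Applied to the codimension-two edge where $A_k$ meets both $A_j$ and $A_\ell$, this identity should express $t\sqrt{C_k}$ as a sum of nonnegative edge contributions, one of which is at least a constant times $\widetilde D\,t^2$. If that fails, the fallback is a direct comparison with a planar three-sector configuration in the $(U_j,U_\ell)$-plane, where the $120^\circ$ computation of Remark \ref{sdrk} gives exactly the factor $3/\pi$.
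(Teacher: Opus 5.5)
Your setup is correct: $A_k=\{T\ge\max_{m<k}U_m\}$, $T$ is independent of $(U_m)$, $\E[T1_{A_k}]=t\sqrt{C_k}$, $\E[U_m1_{A_k}]=0$, and $1/(2\pi\widetilde D)$ is the maximum of the density of $(U_j,U_\ell)$. But the proof stops where the lemma begins. The last two paragraphs only list strategies (Stein identities on faces, edge contributions, a three-sector comparison), and none is carried out, so the key inequality is missing.

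The obstacle you name is also a sign slip. You showed that \eqref{geo:pair-bound} is equivalent to $\widetilde D\le 3/(\pi t)$, i.e.\ $t\le 3/(\pi\widetilde D)$. That is an \emph{upper} bound on $t$. The apparent lower bound on $t$ in terms of $D_{j\ell}$ comes from $\widetilde D$ carrying a factor $t^{-2}$. So an upper bound on $\E[T1_{A_k}]=t\sqrt{C_k}$ is exactly the right direction, combined with the lower bound $C_k\ge 9/(8\pi)$ from \eqref{cst}. Your planned lower bound $t\sqrt{C_k}\gtrsim\widetilde D\,t^2$ points the wrong way and would need an upper bound on $C_k$ instead.

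What actually fails in your naive bound $t\sqrt{C_k}\le\E[T_+\P(U_j\le T,U_\ell\le T\mid T)]$ is that it does not decay in $\widetilde D$. Inflating the covariance of $(U_j,U_\ell)$ sends the conditional probability to the orthant value $\frac14+\arcsin(\rho)/(2\pi)$, not to $0$. The missing idea is to use the centering relations to cancel this bulk term before estimating. Since $\E[1_{A_k}U_j]=\E[1_{A_k}U_\ell]=0$, and $T\ge0$, $U_j\le T$, $U_\ell\le T$ on $A_k$,
\[
\E[1_{A_k}T]=\E[1_{A_k}(T+U_j+U_\ell)]\le\E\big[1_{\{T\ge0,\,U_j\le T,\,U_\ell\le T\}}\max(T+U_j+U_\ell,0)\big].
\]
For fixed $T=\tau\ge0$ the integrand is supported on the bounded triangle $\{u\le\tau,\ v\le\tau,\ u+v\ge-\tau\}$, and there
\[
\iint\max(\tau+u+v,0)\,du\,dv=\frac{9\tau^3}{2}.
\]
Bound the density of $(U_j,U_\ell)$ by $1/(2\pi\widetilde D)$, and use independence and $\E[\max(T,0)^3]=2/\sqrt{2\pi}$. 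This gives
\[
t\sqrt{C_k}\le\frac{9}{2\pi\sqrt{2\pi}\,\widetilde D}.
\]
With $\sqrt{C_k}\ge 3/(2\sqrt{2\pi})$ this becomes $\widetilde D\le 3/(\pi t)$, which is \eqref{geo:pair-bound}. This is the paper's argument, and it needs no Stein identities, edge measures, or three-sector comparison.
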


\begin{proof}
By definition of $z_k$, we have $\E G1_{A_k}=z_k$.  The definitions of $T$ and $Y_j$ yield
\begin{equation}
 \mathbb E(\mathbf 1_{A_k}T)=\|z_k\|=\sqrt{C_k}\,t\geq\frac{3}{2\sqrt{2\pi}}t,
 \quad
 \mathbb E(\mathbf 1_{A_k}Y_i)=\langle w_i,z_k\rangle=\langle w_i,e\rangle\|z_k\|=0,\,\forall\,i<k.
 \label{geo:centroid-identities}
\end{equation}
When $G\in A_k$, we have by \eqref{fd:winning-cones} that $T\geq0$.  And by \eqref{geo:decomposition}, when $G\in A_k$, we have $Y_j\le(1+\alpha_j)tT$ for all $j<k$.
Fix distinct $j,\ell<k$ and write
\[
 U\colonequals\frac{Y_j}{(1+\alpha_j)t},\qquad
 V\colonequals\frac{Y_\ell}{(1+\alpha_\ell)t},
 \qquad
 \Sigma\colonequals
 \begin{pmatrix}
 \frac{h_j}{t^2(1+\alpha_j)^2}
 &\frac{H_{j\ell}}{t^2(1+\alpha_j)(1+\alpha_\ell)}\\
  \frac{H_{j\ell}}{t^2(1+\alpha_j)(1+\alpha_\ell)}
 &\frac{h_\ell}{t^2(1+\alpha_\ell)^2}
 \end{pmatrix}.
\]
Then $(U,V)$ is independent of $T$, and its joint Gaussian density is
bounded above by
\begin{equation}
 M\colonequals\frac{1}{2\pi\sqrt{\mathrm{det}\Sigma}}
 =\frac{t^2(1+\alpha_j)(1+\alpha_\ell)}{2\pi D_{j\ell}}.
 \label{geo:density-bound}
\end{equation}
Using $\E 1_{A_k}U=\E 1_{A_k}V=0$ by \eqref{geo:centroid-identities}, $U,V\leq T$ on $A_k$, taking a
positive part, and then enlarging the integration region yields
\[
 \mathbb E[\mathbf 1_{A_k}T]
 =\mathbb E[\mathbf 1_{A_k}(T+U+V)]\\
 \le\mathbb E\big[
 \mathbf 1_{\{T\ge0,\ U\le T,\ V\le T\}}\max(T+U+V,0)\big].
\]
Fix $\tau\ge0$.  In the $u,v$ plane with $u\le \tau$, $v\le \tau$ the integrand
is supported on the triangle with vertices $(\tau,\tau)$, $(\tau,-2\tau)$, and
$(-2\tau,\tau)$. The corresponding unweighted integral is
\begin{align*}
 \iint_{\substack{u+v\geq-\tau,\\ u\leq\tau, v\leq\tau}}\max(\tau+u+v,0)\,dv\,du
 =\int_{z=0}^{z=3\tau}z(3\tau-z)dz
 =\frac{9\tau^3}{2},
\end{align*}
using $z=\tau+u+v$.
Using \eqref{geo:density-bound},
independence, and $\mathbb E[\max(T,0)^3]=2/\sqrt{2\pi}$, we obtain
\[
 \frac{3}{2\sqrt{2\pi}}t
 \stackrel{\eqref{geo:centroid-identities}}{\leq}\mathbb E[\mathbf 1_{A_k}T]
 \le \frac{9M}{2}\,\mathbb E[\max(T,0)^3]
 \stackrel{\eqref{geo:density-bound}}{=}\frac{9t^2(1+\alpha_j)(1+\alpha_\ell)}{2\pi\sqrt{2\pi}\,D_{j\ell}}.
\]
This implies
\eqref{geo:pair-bound}.
\end{proof}

Recall that $h_j>0$ and $H_{j\ell}<0$ for $j\ne \ell$.
Define $\rho_{j\ell}\colonequals H_{j\ell}/\sqrt{h_jh_\ell}$ for all $1\leq j,\ell\leq3$.  Then $\mathrm{det}(\rho_{j\ell})\geq0$ implies
$\sum_{1\leq j<\ell\leq3}\rho_{j\ell}^2\leq\sum_{1\leq j<\ell\leq3}\rho_{j\ell}^2+2|\rho_{12}\rho_{13}\rho_{23}|\le1$, so
\begin{equation}\label{djineq}
\begin{aligned}
 \sum_{1\leq j<\ell\leq3}D_{j\ell}^2
 &\stackrel{\eqref{ddef}}{=} \sum_{1\leq j<\ell\leq3}(h_{j}h_\ell - h_jh_\ell\rho_{j\ell}^2)
 \geq
 \sum_{1\leq j<\ell\leq3}h_jh_\ell-\max_{1\leq j<\ell\leq3}h_jh_\ell\\
 &
 =\min_{1\leq i\leq 3} \Big(h_i\sum_{j\in\{1,2,3\}\setminus\{i\}}h_j\Big)
 \stackrel{\eqref{cldef}}{\ge}(c-t^2)(L-2t^2).
\end{aligned}
\end{equation}
The last step used monotonicity, i.e. replacing $h_1,h_2,h_3$ by $r_1^2 - t^2,r_2^2 - t^2,r_3^2 - t^2$ decreases the penultimate quantity by \eqref{geo:coefficients} and \eqref{geo:covariance}; also $r_3^2-t^2=c-t^2$ and $h_1+h_2\geq L-2t^2$.

We now bound $\sum_{1\leq j<\ell\leq3}D_{j\ell}^2$ from above by \eqref{geo:pair-bound}.  
Monotonicity of $\sum_{1\leq j<\ell\leq3}(1+\alpha_j)^2(1+\alpha_\ell)^2$ implies we may assume $\sum_{j=1}^3 \alpha_j=1$.  Let $q\colonequals\sum_{1\leq j<\ell\leq 3}\alpha_j\alpha_\ell$, $u\colonequals \alpha_1\alpha_2\alpha_3\ge0$.  Using $\sum_{j=1}^{3}\alpha_j=1$, we get $\sum_{j=1}^3\alpha_j^2=1-2q$, $\sum_{1\leq j<\ell\leq 3}(\alpha_j^2\alpha_\ell + \alpha_\ell^2 \alpha_j)=q-3u$, $\sum_{1\leq j<\ell\leq 3}\alpha_j^2 \alpha_\ell^2 = q^2-2u$,
\begin{equation}\label{dj9}
 \frac{\pi^2}{9t^2}\sum_{1\leq j<\ell\leq3}D_{j\ell}^2
 \stackrel{\eqref{geo:pair-bound}}{\leq}
 \sum_{1\leq j<\ell\leq3}(1+\alpha_j)^2(1+\alpha_\ell)^2=9+2q+q^2-8u\le\frac{88}{9}.
\end{equation}
The last inequality used $q=(1-\sum_{j=1}^{3}\alpha_j^2)/2\leq1/3$ and $u\geq0$.
%
%
Since $88/\pi^2<9$ and $L\ge2c$,
\begin{equation}\label{det}
 \boxed{(c-t^2)(L-2t^2)\stackrel{\eqref{djineq}\wedge\eqref{dj9}}{<}9t^2}\,\,,\qquad
 \boxed{c<t^2+2.15t},\quad\text{since }3\sqrt{2}/2<2.15.
 \end{equation}

 \section{The Number of Sets is at Most 4}\label{seclow}

\begin{lemma}\label{dimlem}
Let $(A_1,\ldots,A_k)\subset\R^{k-1}$ be a partition of measurable sets each with positive measure such that $F(\A)=C_k$.  Then $k\leq4$. 
\end{lemma}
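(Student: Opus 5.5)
We argue by contradiction: suppose $k\geq5$ and $\A$ is as in the statement, so that Lemmas \ref{conlem1}, \ref{conlem2} and \ref{geo:pair} all apply. Label the cells as in \eqref{cldef}, with $r_1\geq\cdots\geq r_k=t$, $L=r_1^2+r_2^2$ and $c=r_3^2$. The plan is to collect three kinds of constraints on the normalized norms $r_i$ and play them against each other:
\begin{itemize}
\item the normalization $\sum_{i=1}^k r_i^2=1$, which is just $F(\A)=C_k$ rewritten via \eqref{rdef};
\item the measure identity $\sum_{i=1}^k p_i=1$, combined with the lower bounds $p_i\geq a_k r_i^2$ from \eqref{sc:three-bounds} and $p_i\geq 0.415r_i+0.15r_i^2$ from \eqref{con2};
\item the two-cell constraint $c<t^2+2.15t$ and the sharper $(c-t^2)(L-2t^2)<9t^2$ from \eqref{det}, together with $r_i\leq 2/3$.
\end{itemize}

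First I will use the measure constraints to show that $t$ must be small. Summing \eqref{con2} and using $\sum r_i^2=1$ gives
\[
1\geq 0.415\sum_i r_i+0.15.
\]
Hence $\sum_i r_i\leq 0.85/0.415\approx 2.05$. On the other hand $\sum_i r_i\geq \sum_i r_i^2/r_1\geq 3/2$ by $r_1\leq 2/3$, which is compatible, so this bound alone is not a contradiction. It does, however, control the number of cells with sizable $r_i$. The idea is that, with $\sum r_i^2=1$ and each $r_i\leq 2/3$, the linear term in \eqref{con2} forces the mass $\sum r_i^2$ to be concentrated on few cells. In particular, cells $3,\ldots,k$ are each at most $\sqrt c$, and there are at least three of them.

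Next I will derive the contradiction by bounding $\sum_i r_i^2$ from above. Since $r_i\leq\sqrt c$ for $i\geq3$,
\[
\sum_{i\geq 3} r_i^2\leq (k-2)c \quad\text{and}\quad \sum_{i\geq3} p_i\geq \sum_{i\geq 3}(0.415 r_i+0.15r_i^2).
\]
The linear term then lets me trade $r_i^2\leq\sqrt c\,r_i$. Combined with $L\leq 8/9$ and $p_1+p_2\geq 0.415(r_1+r_2)+0.15L$, I will show that $\sum p_i\leq1$ forces $c$ to be small relative to $1-L$, namely
\[
1-L=\sum_{i\geq3}r_i^2\leq \sqrt c\sum_{i\geq3}r_i\lesssim \sqrt c\,\bigl(1-0.415(r_1+r_2)-0.15L\bigr)/0.415.
\]
Then \eqref{det} with $t\leq\sqrt c$ gives $(c-t^2)(L-2t^2)<9t^2$, which should force $c$ to be comparable to $t$, i.e. all of $r_3,\ldots,r_k$ nearly equal and tiny. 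In turn this makes $\sum_{i\geq3}r_i$ large compared to $\sum_{i\geq 3}r_i^2$, violating the measure budget through the linear $0.415r_i$ term. For $k=5$ the stronger $a_5=0.929$ may be needed for $p_1,p_2$.

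The main obstacle is making this optimization rigorous: it amounts to maximizing a function over $(r_1,r_2,c,t)$ subject to \eqref{det}, $r_i\leq 2/3$, and the budget, and the margins are likely thin. I would first reduce to the case where all $r_i$ with $i\geq3$ lie in $[t,\sqrt c]$, using the monotonicity of the budget, and then do a one- or two-variable case analysis in $t$, splitting into $t$ small, where the linear term dominates, and $t$ moderate, where \eqref{det} dominates. If needed I would fall back on interval-arithmetic checks as in Section \ref{sec:numA}.
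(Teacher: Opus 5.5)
\textbf{There is a genuine gap: the step that yields the contradiction is missing.} You list the same ingredients the paper uses: \eqref{sc:three-bounds}, \eqref{con2}, \eqref{det}, $\sum_j p_j=\sum_j r_j^2=1$ and $r_i\le 2/3$. But the argument itself is deferred to an unspecified ``optimization,'' and the one mechanism you describe covers only part of the parameter range.

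Your chain is: small $c$ gives $\sum_{i\ge3}r_i\ge(1-L)/\sqrt c$, which breaks the budget. Done correctly, this is exactly the paper's \eqref{mass}. Correctly here means using $p_1+p_2\ge aL$ for the two large cells for every $k\ge5$, not only $k=5$. For example, $k=5$, $r_1=r_2=2/3$, $r_3=.3$, $r_4=r_5\approx.103$ satisfies both parts of \eqref{det}, yet $\sum_i(.415r_i+.15r_i^2)\approx.91<1$. With $L\le 8/9$, \eqref{mass} yields only $\sqrt c>.23$, and the sharp form $(c-t^2)(L-2t^2)<9t^2$ then yields only $t>.06$. That is a lower bound on $t$, not a contradiction.

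Your reading of \eqref{det} is also off. The bound $c<t^2+2.15t$ gives $r_3\lesssim\sqrt{2.15t}$, which is far larger than $t$ when $t$ is small, so $r_3,\dots,r_k$ need not be ``nearly equal.'' For moderate $t$ (e.g.\ $t\approx.1$, where $r_3$ can exceed $.4$), neither \eqref{det} nor the $(1-L)/\sqrt c$ trade forces anything. So your plan to let \eqref{det} ``dominate'' there cannot work.

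\textbf{The missing idea is per-cell excess accounting.} It combines both lower bounds on the same cell. Set $\delta_j\colonequals p_j-ar_j^2$. Then $\delta_j\ge\max(0,e(r_j))$ with $e(r)=.415r-(a-.15)r^2$, and $\sum_j\delta_j=1-a=E$ exactly, because $\sum_jp_j=\sum_jr_j^2=1$. The total $E$ is small ($.071$ for $k=5$, $.116$ for $k\ge6$), while every cell with $r_j$ in a moderate range contributes a definite amount. This is where $k\ge5$ actually gets used:
\begin{itemize}
\item First show $r_4\le.4$. Otherwise $t>.07$, and the excess of cells $1$--$4$ (bounded below by $K(s^2-r_j^2)$) plus that of cells $j\ge5$ exceeds $E$.
\item Hence at least $m_0$ cells ($m_0=2$ for $k=5$, $3$ for $k\ge6$) lie in $[t,.4]$, each contributing at least $\min(e(t),e(.4))$. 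This forces $t<.12$.
\item On $.06<t<.12$, the bounds $.23<r_3<\sqrt{2.3t}$ give $E\ge\min(e(.23),e(\sqrt{2.3t}))+m_0e(t)>E$. By concavity this only needs checking at the endpoints.
\end{itemize}
Without such a reduction, your ``maximize over $(r_1,r_2,c,t)$'' is not a four-variable problem, since the budget depends on all of $r_4,\dots,r_{k-1}$ and on $k$.
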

\begin{proof}
We argue by contradiction and assume $k\geq5$.  Using \eqref{sc:three-bounds} for $j=1,2$ gives $p_1+p_2\geq.884 L$ by \eqref{cldef}.  For each $j\geq3$, $r_j\leq r_3=\sqrt{c}$ by \eqref{cldef}, so $r_j\geq r_j^2/\sqrt{c}$, and Lemma \ref{conlem2} says $p_j\geq(.415/\sqrt{c}+.15)r_j^2$. Summing over $j$ and recalling that $1=\sum_{j=1}^{k}p_j=\sum_{j=1}^{k}r_j^2$ by \eqref{mdef} and \eqref{rdef}, so $\sum_{j\geq3}r_j^2=1-L$ by \eqref{cldef},
\[
1=\sum_{j=1}^{k}p_j
\geq .884L+(.415/\sqrt{c}+.15)(1-L)
=.884+(.415/\sqrt{c}+.15-.884)(1-L).
\]
Rearranging gives
\begin{equation}\label{mass}
 (1-L)(.415/\sqrt c-.734)\le.116.
\end{equation}
Since $L\le8/9$ by \eqref{sc:three-bounds}, \eqref{mass} implies \begin{equation}\label{cbd}
\sqrt c>.23.
\end{equation}
Moreover
\begin{equation}\label{tineq}
t>.06.
\end{equation}
Otherwise \eqref{det} gives $c<.1326<.37^2$ and \eqref{mass} gives $L>.7$,
contradicting \eqref{det} because
\[
 (.23^2-.06^2)(.7-2\cdot.06^2)=.03415504>9\cdot.06^2=.0324.
\]
Recalling $a$ from \eqref{rdef}, let $d\colonequals a-.15$, $s\colonequals .415/d$, $e(r)\colonequals .415r-dr^2$, and $E\colonequals 1-a$.
\begin{equation}\label{excess}
 \delta_j\colonequals 
 p_j-ar_j^2\stackrel{\eqref{sc:three-bounds}\wedge\eqref{con2}}{\ge}\max(0,e(r_j)),\qquad\sum_{j=1}^k\delta_j\stackrel{\eqref{rdef}}{=}E.
\end{equation}

We claim $r_4\le.4$. \begin{proof}[Proof of claim] Otherwise $c>.16$ by \eqref{cldef}, so \eqref{det} forces $t>.07$.  Write $e(r)=dr(s^2-r^2)/(s+r)$.  Let $K\colonequals.4d/(s+.4)$.  When $r\ge.4$, we claim
$\max(0,e(r))\ge K(s^2-r^2)$: for $r>s$ the right side is negative, and for $r<s$, it follows from our expression for $e(r)$.  So,
\begin{equation}\label{kineq}
\sum_{j=1}^{4}\delta_j
\geq K\sum_{j=1}^{4}(s^2 - r_j^2)
=K(4s^2-1+\sum_{j=5}^{k}r_j^2)
\geq K(4s^2-1+(k-4)(.07)^2).
\end{equation}
Since $t>.07$, for $j\ge5$, $.07<r_j\le1/\sqrt5<.45$ (since $\sum_{j=1}^{k}r_j^2=1$, by \eqref{rdef}, $r_{j}^2\leq 1/5$ for all $j\geq5$);
concavity of $e(u)$ and $e(.45)>e(.07)$ give $\delta_j\ge e(.07)$ by \eqref{excess}.
Therefore
\[
E\stackrel{\eqref{excess}}{=}
\sum_{j=1}^{4}\delta_j+\sum_{j=5}^{k}\delta_j
\stackrel{\eqref{kineq}}{\geq}
 K[4s^2-1+(k-4)(.07)^2]+(k-4)e(.07).
\]
For $k=5$ this exceeds
$.334(.1352+.0049)+.0252329=.0720263>.071=E$, by definition of $E$;
for $k\ge6$ it exceeds
$.304(.2784+2\cdot.0049)+2(.0254534)=.1385196>.116=E$, by definition of $E$.  Having achieved a contradiction $E>E$, we conclude $r_4\le.4$.

%
%
%
\end{proof}

%
Put $m_0=m_0(k)\colonequals 2$ for $k=5$ and $m_0\colonequals3$ otherwise. There are at least $m_0$ radii $r_j$ with $j>3$, all in $[t,.4]$. If $t\ge.12$, concavity of $e(r)$ would give
$E\ge m_0\min(e(.12),e(.4))>E$.
Thus $.06<t<.12$ by \eqref{tineq}. Since $e$ increases on $[0,.12]$ and $e(.4)>e(.12)$,
each of these $m_0$ cells contributes at least $e(t)$ to $\sum_{j=1}^{k}\delta_j$.

By \eqref{cbd}, \eqref{det} and $t<.12$, we get $.23<\sqrt c<\sqrt{2.3t}$, i.e. $.23<r_3<\sqrt{2.3t}$ by \eqref{cldef}, so
\[
 E\stackrel{\eqref{excess}}{\geq}
 \delta_3+
\sum_{j=4}^{3+m_0}\delta_j
\stackrel{\eqref{excess}}{\ge}\min(e(.23),e(\sqrt{2.3t}))+m_0e(t).
\]
The first alternative increases with $t$ and exceeds $E$ at $t=.06$.
The second,
$F(t)=.415\sqrt{2.3t}-2.3dt+m_0(.415t-dt^2)$, is concave,
so check only the endpoints of $[.06,.12]$:
\[
\begin{array}{c|ccc}
 &e(.23)+m_0e(.06)-E&F(.06)-E>&F(.12)-E>\\\hline
 k=5&.0274321&.0192392&.0069608\\
 k\ge6&.0073942&.0030308&.0149072
\end{array}
\]

%
%

%

That is, we showed $E>E$, a contradiction.  We conclude that $k\leq4$.

\end{proof}

\appendix

\section{Numerical Certificates}\label{sec:num}

\subsection{Numerical Bound 1}\label{sec:numA}
This Python code verifies \eqref{sc:halfspace-minimum}.

\begin{verbatim}
from math import exp, erfc, pi, sqrt
from scipy.optimize import minimize_scalar

f = lambda x: (9/8)*exp(x*x)*erfc(x/sqrt(2))
g = lambda x: 2*x - sqrt(2/pi)*exp(-x*x/2)/erfc(x/sqrt(2))

a, b = 0.6119, 0.6121
assert g(a) < 0 < g(b)

result = minimize_scalar(
    f, bounds=(a, b), method="bounded", options={"xatol": 1e-14}
)
assert result.success
assert result.fun > 0.8843837 > 0.884

print(f"x = {result.x:.6f}")
print(f"minimum = {result.fun:.12f}")
\end{verbatim}






\subsection{Numerical Bound 2}\label{sec:numB}

The following code proves the lower bound of $.93$ for the quantity \eqref{pk5bd}.

\begin{verbatim}
from math import asin, pi, sqrt
from scipy.optimize import minimize_scalar

def f(a):
    return (9/4) * (0.5 - (asin(a) + a*sqrt(1-a*a))/pi) / (1-a*a)**3

result = minimize_scalar(
    f, bounds=(0, 1), method="bounded", options={"xatol": 1e-14}
)
assert result.success
print(f"a = {result.x:.10f}")
print(f"minimum = {result.fun:.10f}")
\end{verbatim}

\subsection{Numerical Bound 3}\label{secpyb}

The following Python code verifies \eqref{sc:test-probability}.

\begin{verbatim}
import numpy as np
from scipy.stats import norm

r = np.linspace(0, 2/3, 1001)
a, b = r[:-1], r[1:]

P = (0.415 + 0.15*a[1:])*a[1:]
q = norm.ppf(P)
lam = norm.pdf(q)/P
U = np.r_[1.0, lam*(q + lam)] 

Delta = ((9/(8*np.pi))/(1 + np.sqrt(1-a*a)) - U*(0.415 + 0.15*b)**2)

assert np.all(Delta > 0)
print(f"Smallest margin: {Delta.min():.12f}")
\end{verbatim}

\bibliographystyle{abbrv}

\end{document}